\documentclass[11pt]{amsart}

\usepackage{tikz-cd} \usepackage{adjustbox}

% packages
\usepackage{amsmath,amsthm,amsfonts,amssymb,latexsym,verbatim,bbm}
\usepackage{subcaption}
\usepackage[shortlabels]{enumitem}
\usepackage{afterpage}
\usepackage{mathtools}
\usepackage{mathrsfs}
\usepackage{hyperref}
\usepackage[capitalize]{cleveref}
%\usepackage[mathscr]{euscript}

% connor thesis packages
%\usepackage[T1]{fontenc}
%\usepackage{blkarray}

% for skipping stuff in toc with hyperref
% see http://www.ams.org/faq?faq_id=238

% allow page breaks in display math mode
%\allowdisplaybreaks
%
%% list spacing
%%\setlist{itemsep=.5\baselineskip,topsep=.5\baselineskip}
%

%% theorem and equation numbering

\numberwithin{equation}{section}
\theoremstyle{definition}
\newtheorem{theorem}{Theorem}[section]
\newtheorem{thm}[theorem]{Theorem}
\newtheorem{lemma}[theorem]{Lemma}

\newtheorem{prop}[theorem]{Proposition}
\newtheorem{example}[theorem]{Example}
\newtheorem{definition}[theorem]{Definition}
\newtheorem{defn}[theorem]{Definition}

\newtheorem{cor}[theorem]{Corollary}
\newtheorem{conjecture}[theorem]{Conjecture}

% Custom commands

% standard commands
\newcommand{\iso}{\cong}

% letters --- shortcuts for standard mathematical objects
\newcommand{\R}{\mathbb{R}}
\newcommand{\C}{\mathbb{C}}
\newcommand{\Z}{\mathbb{Z}}

\newcommand{\eps}{\epsilon}

\newcommand{\Id}{\mathbbm{1}}

% letters --- mathcal 
\newcommand{\mcA}{\mathcal{A}}
\newcommand{\mcB}{\mathcal{B}}
\newcommand{\mcC}{\mathcal{C}}

\newcommand{\mcF}{\mathcal{F}}
\newcommand{\mcG}{\mathcal{G}}

\newcommand{\mcJ}{\mathcal{J}}

\newcommand{\mcL}{\mathcal{L}}

\newcommand{\mcR}{\mathcal{R}}
\newcommand{\mcS}{\mathcal{S}}

\newcommand{\mcU}{\mathcal{U}}

% letters --- mathbb

\newcommand{\mbN}{\mathbb{N}}

% letters --- mathfrak

% letters--mathscr
\newcommand{\msA}{\mathscr{A}}
\newcommand{\msB}{\mathscr{B}}

\newcommand{\msS}{\mathscr{S}}

% letters--mathsf

% operations on words:

\newcommand{\End}{\text{End}}

\usepackage[top=1.5in,bottom=1.5in,left=1.5in,right=1.5in]{geometry}
\usepackage{braket}

\title[Satisfiability and boolean constraint system algebras]{Satisfiability
problems and algebras of boolean constraint system games}

\author[C. Paddock]{Connor Paddock$^{1}$} \author[W. Slofstra]{William
Slofstra$^{2,3}$}

\address[1]{Department of Mathematics and Statistics, University of Ottawa, Canada}
\address[2]{Institute for Quantum Computing, University of Waterloo, Canada}
\address[3]{Department of Pure Mathematics, University of Waterloo,
Canada}

\email{weslofst@uwaterloo.ca}

%\date{}

\DeclareMathOperator{\TRUE}{TRUE}
\DeclareMathOperator{\FALSE}{FALSE}
\DeclareMathOperator{\SAT}{SAT}
\DeclareMathOperator{\LIN}{LIN}
\newcommand{\wtd}{\widetilde}
\DeclareMathOperator{\oneink}{1-IN-k}

\begin{document}

\maketitle

\begin{abstract}
Mermin and Peres showed that there are boolean constraint systems (BCSs) which
are not satisfiable, but which are satisfiable with quantum observables. This
has led to a burgeoning theory of quantum satisfiability for constraint
systems, connected to nonlocal games and quantum contextuality. In this theory,
different types of quantum satisfying assignments can be understood as
representations of the BCS algebra of the system. This theory is closely
related to the theory of synchronous games and algebras, and every synchronous
algebra is a BCS algebra and vice-versa. The purpose of this paper is to
further develop the role of BCS algebras in this theory, and tie up some loose
ends: We give a new presentation of BCS algebras in terms of joint spectral
projections, and show that it is equivalent to the standard definition. We
construct a constraint system which is $C^*$-satisfiable but not tracially
satisfiable. We show that certain reductions between constraint systems lead to
$*$-homomorphisms between the BCS algebras of the systems, and use this to
streamline and strengthen several results of Atserias, Kolaitis, and Severini
on analogues of Schaefer's dichotomy theorem. In particular, we show that the
question of whether or not there is a non-hyperlinear group is linked to
dichotomy theorems for $\mcR^{\mcU}$-satisfiability. 
\end{abstract}

\section{Introduction}

Suppose $B$ is a boolean\footnote{In the nonlocal game literature, these have
previously been referred to as binary constraint systems. We use boolean instead
of binary to avoid confusion with $2$-ary relations. Fortunately both words lead
to same acronym.} constraint system (BCS), meaning a finite set of constraints
on a finite set of boolean variables. The BCS nonlocal game associated to $B$ is
a game with two players, who are unable to communicate while the game is in
progress. One player is given a constraint chosen from the system at random, and
must reply with a satisfying assignment to the variables in the given
constraint. The other player is given a variable from the constraint, and must
reply with an assignment to the variable. The players win if they assign the
same value to the chosen variable. With shared randomness, the players can play
this game perfectly if and only if $B$ is satisfiable, meaning that there is an
assignment to the variables of $B$ satisfying all the constraints. However,
there are constraint systems where the players can play perfectly by sharing an
entangled quantum state, even though the constraint system is not satisfiable.
The first examples of such systems were constructed by Mermin and Peres
\cite{Mer90,Per90}, although they were looking at quantum contextuality
scenarios rather than nonlocal games. The first game reformulation of Mermin and
Peres' examples was given by Aravind \cite{Ara02}, and the general class of BCS
games was introduced by Cleve and Mittal \cite{CM14}.

Cleve and Mittal show that the BCS game associated to $B$ has a perfect
(finite-dimensional) quantum strategy if and only if $B$ is matrix-satisfiable,
in the sense that every variable can be assigned a unitary matrix with $\pm
1$-eigenvalues, such that for all constraints, the matrices assigned to the
variables in that constraint are jointly diagonalizable, and (when $a \in \Z_2$
is encoded as the eigenvalue $(-1)^a \in \R$) the joint eigenvalues satisfy the
constraint. Implicit in their result is that $B$ is matrix-satisfiable if and
only if a certain $*$-algebra $\msA(B)$, called the BCS algebra associated to
$B$,  has a finite-dimensional $*$-representation. This algebra appeared
explicitly in \cite{Ji13}, and more recently in \cite{Pad22}. The standard 
definition of $\msA(B)$ in these works (both implicit and explicit) uses
multilinear polynomials, and can be inconvenient to work with. In
\Cref{sec:bcs_algebras}, we give an alternative definition of $\msA(B)$ in
terms of projections, and show that this presentation is equivalent to the
standard presentation.

The relationship between BCS games and BCS algebras parallels the situation for
synchronous games, in that a synchronous game has a perfect quantum strategy if
and only if the associated synchronous algebra has a finite-dimensional
$*$-representation. The concept of a synchronous game arose from the study of
quantum chromatic numbers in \cite{PSSTW16}, and synchronous algebras were
first defined explicitly by Helton, Meyer, Paulsen, and Satriano in
\cite{HMPS19}. It turns out that synchronous games and BCS games are closely
related. Every BCS game can be turned into a synchronous game, such that the
modified game has a perfect strategy if and only if the original game has a
perfect strategy. Furthermore, the synchronous algebra of this modified game is
isomorphic to the BCS algebra of the underlying constraint system. This
connection between BCS games and synchronous games was first shown by
Kim-Schafhauser-Paulsen \cite{KPS18} and Goldberg \cite{Gol21} for linear
constraint systems; the general case (which is very similar) is explained in
\Cref{sec:bcs_algebras}. Going in the opposite direction, Fritz shows that
the class of synchronous algebras is equal to the class of finite colimits of
finite-dimensional abelian $C^*$-algebras \cite{Fri18}. It is easy to see that
this latter class is equal to the class of BCS algebras, so in particular every
synchronous algebra is isomorphic to a BCS algebra. A direct proof of this fact
using constraint systems is given in \cite{Pad22}; the construction is
sketched in \Cref{sec:bcs_algebras} for the convenience of the reader.

Much of the interest in BCS games and synchronous games has been in using them
to separate different models of entanglement. Mermin and Peres' examples show
that finite-dimensional quantum entanglement is more powerful than shared
randomness. In \cite{Slof17}, the second author constructs examples of linear
constraint systems which are not matrix-satisfiable, but which are
$\mcR^{\mcU}$-satisfiable, in the sense that there is a homomorphism $\msA(B)
\to \mcR^{\mcU}$, where $\mcR^{\mcU}$ is an ultrapower of the hyperfinite
II$_1$ factor $\mcR$. The resulting BCS games cannot be played perfectly with a
finite-dimensional quantum state, but can be played perfectly with a limit of
finite-dimensional states. The celebrated MIP*=RE theorem of Ji, Natarajan,
Vidick, Wright, and Yuen shows (via synchronous games) that there are
constraint systems which are not $\mcR^{\mcU}$-satisfiable, but are tracially
satisfiable, in the sense that $\msA(B)$ has a tracial state \cite{JNVWY22}. In
this case, the BCS games can be played perfectly in the commuting operator
model of entanglement, but cannot be played perfectly with any limit of
finite-dimensional states.  Helton, Meyer, Paulsen, and Satriano show (again
via synchronous games) that there are constraint systems which are
algebraically-satisfiable, in the sense that $\msA(B)$ is non-trivial, but not
$C^*$-satisfiable, meaning that $\msA(B)$ does not admit a state \cite{HMPS19}.
In Section \ref{sec:separations}, we complete this picture by showing that
there is a constraint system which is $C^*$-satisfiable, but not
tracially-satisfiable. This answers a question of \cite{Har21}. While the
existence of such constraint systems does not have a nonlocal game
interpretation that we know of, it does imply that there are contextuality
scenarios where the algebra generated by the operators does not admit a tracial
state. 

Deciding whether a constraint system is satisfiable is a well-known NP-complete
problem. However, for some classes of constraint systems, such as linear
systems, 2SAT systems, Horn systems, and so on, satisfiability can be decided
in polynomial time. Schaefer's dichotomy theorem states that for any finite
list of constraints (called a constraint language), the satisfiability problem
for systems built out of these constraints is either decidable in polynomial
time, or NP-complete \cite{Sch78}. The languages for which satisfiability can be decided in
polynomial time are called Schaefer classes.  The key idea in the proof of
Schaefer's theorem is to consider what other relations can be defined from a
given constraint language. For instance, it's well-known that any relation is
definable from 3SAT relations, and consequently 3SAT is NP-complete. Ji showed
that some reductions between constraint languages (for instance, from k-SAT to
3-SAT) also hold for BCS games, and that satisfiability and
matrix-satisfiability are the same for 2SAT and Horn systems \cite{Ji13}. This
subject was then investigated systematically by Atserias, Kolaitis, and
Severini \cite{AKS17}. They show that satisfiability, matrix-satisfiability,
and $C^*$-satisfiability are the same for all the Schaefer classes except
linear systems. They observe that linear systems are definable from any
non-Schaefer classes. Since matrix and $C^*$-satisfiability are undecidable for
linear systems by \cite{Slof19,Slof17}, they arrive at an analogue of Schaefer's
dichotomy theorem for matrix-satisfiability and $C^*$-satisfiability: for every
constraint language, matrix-satisfiability and $C^*$-satisfiability are either
decidable in polynomial time, or undecidable. In addition, there is a dichotomy
for separations of different types of satisfiability: for all constraint
languages, satisfiability, matrix-satisfiability, and $C^*$-satisfiability are
either all equivalent, or all different. 

In the final section of this paper, we show that when a constraint language
$\mcL$ is definable from another language $\mcL'$, then for every constraint
system $B$ over $\mcL$, there is a constraint system $B'$ over $\mcL'$ such
that there are $*$-homomorphisms $\msA(B) \to \msA(B')$ and $\msA(B') \to
\msA(B)$. As a result, all the types of satisfiability discussed above are
equivalent for $B$ and $B'$. We use this fact to streamline and extend
Atserias, Kolaitis, and Severini's results. In particular, we show that
tracial\nobreakdash-, $C^*$\nobreakdash-, and algebraic-satisfiability are all
either decidable in polynomial time, or coRE-complete. We point out that while
matrix-satisfiability and $\mcR^{\mcU}$-satisfiability are either decidable in
polynomial time or undecidable, the exact complexity in the latter case is not
known. We refine Atserias, Kolaitis, and Severini's separation dichotomy
theorem to show that for all constraint languages, either satisfiability is the
same as algebraic-satisfiability, or satisfiability, matrix-satisfiability, and
$\mcR^{\mcU}$-satisfiability are all different. The sticking point to including
tracial-satisfiability in this list is whether or not there is a linear system
which is tracially satisfiable but not $\mcR^{\mcU}$-satisfiable, a problem
which is equivalent to the existence of a non-hyperlinear finitely-presented
group. 

On the other hand, tracial-satisfiability, $C^*$-satisfiability, and
algebraic-satisfiability are all equivalent for linear systems by \cite{CLS17}. 
This fact, combined with the existence of constraint systems which are
algebraically satisfiable but not tracially satisfiable, implies that it is not
possible to embed every BCS algebra in the BCS algebra of a linear system game.
This is an obstacle for one method of constructing a non-hyperlinear group that
tends to come up when discussing potential consequences of MIP*=RE.
Combined with a lemma from \cite{Sch78}, this also shows that for all
constraint languages, tracial-satisfiability, $C^*$-satisfiability, and
algebraic-satisfiability are either all equivalent, or all different. 
This is discussed further in \Cref{sec:define}. 

\section*{Acknowledgements}

We thank Albert Atserias, Richard Cleve, Tobias Fritz, Sam Harris, and Vern Paulsen for helpful discussions.
This work was supported by NSERC DG 2018-03968 and an Alfred P.  Sloan Research
Fellowship. 

\section{Preliminaries and notation}

\subsection{Finitely-presented $*$-algebras}
For a set $X$, let $\C^*\langle X\rangle$ be the free unital complex
$*$-algebra generated by $X$. If $R \subseteq \C^*\langle X \rangle$, let
$\C^* \langle X : R \rangle$ denote the quotient of $\C^*\langle X \rangle$ 
by the two-sided ideal $\langle \langle R \rangle \rangle$ generated by $R$.
If $X$ and $R$ are finite, then $\C^* \langle X : R \rangle$ is said to be a
finitely-presented $*$-algebra.

If $W$ is a complex vector space, then $\End(W)$ will denote the space of
linear operators from $W$ to itself. We use $\Id_W$ for the identity operator
on $W$. A representation of a $*$-algebra $\msA$
is an algebra homomorphism $\phi : \msA \to \End(W)$ for some vector space $W$.
A subrepresentation is a subspace $K \subseteq W$ such that $\phi(a) K
\subseteq K$ for all $a \in \msA$, and a representation is irreducible if it
has no subrepresentations. A $*$-representation of $\msA$ is a $*$-homomorphism
$\phi : \msA \to \mcB(H)$, where $H$ is a Hilbert space, and $\mcB(H)$ is the
$*$-algebra of bounded operators on $H$. 

If $\msA = \C^*\langle X : R \rangle$ is a presentation of a $*$-algebra, and
$\msB$ is another $*$-algebra, then $*$-homomorphisms $\msA \to \msB$
correspond to $*$-homomorphisms $\phi : \C^*\langle X \rangle \to \msB$ such
that $\phi(r) = 0$ for all $r \in R$. Hence a $*$-representation of $\msA$ is
an assignment of operators to the elements of $X$, such that the operators
satisfy the defining $*$-relations in $R$, and we often work with representations
in these terms.

An element $x$ of a $*$-algebra algebra $\msA$ is said to be \textbf{positive},
written $x \geq 0$, if $x = \sum_{i=1}^k s_i^* s_i$ for some $k \geq 1$ and
$s_1,\ldots,s_k \in \msA$. The algebra $\msA$ is said to be a
\textbf{semi-pre-$C^*$-algebra} if for all $x \in \msA$, there is a scalar $\lambda \geq
0$ such that $x^* x \leq \lambda\cdot 1$ \cite{Oza13}. All the $*$-algebras we work
with will be semi-pre-$C^*$-algebras.  A \textbf{state} on a semi-pre-$C^*$-algebra
$\msA$ is a linear functional $f : \msA \to \C$ such that $f(x^*) =
\overline{f(x)}$ for all $x \in \msA$, $f(x) \geq 0$ for all $x \geq 0$, and
$f(1) = 1$. If $\phi : \msA \to \mcB(H)$ is a $*$-representation of $\msA$, and
$\ket{v} \in H$ is a unit vector, then $x \mapsto \braket{v|\phi(x)|v}$ is a
state.  Conversely, if $f$ is a state then by the GNS representation theorem,
there is a $*$-representation $\phi : \msA \to \mcB(H)$ and a unit vector
$\ket{v} \in H$ such that $f(x) = \braket{v|\phi(x)|v}$ for all $x \in \msA$.
Hence a semi-pre-$C^*$-algebra $\msA$ has a state if and only if it has a
$*$-representation.  A state $f$ on $\msA$ is \textbf{tracial} if $f(ab) =
f(ba)$.

\subsection{The joint spectrum and representations of $\C\Z_2^k$}

Consider the finitely presented group 
\begin{equation*} 
    \Z_2^k=\langle z_1,\ldots,z_k:z_i^2=1,1\leq i\leq k,z_iz_j=z_jz_i,1\leq i\neq j\leq k\rangle.
\end{equation*} 
The group algebra $\C \Z_2^k$ is the $*$-algebra generated by $z_1,\ldots,z_k$,
with defining relations from the group presentation of $\Z_2^k$ above, along
with the relations $z_i^* z_i = z_i z_i^* = 1$ for all $1 \leq i \leq k$. Hence
a $*$-representation of $\C \Z_2^k$ is a collection of unitary operators $Z_1,\ldots,Z_k$
such that $Z_i^2=\Id$ and $Z_i Z_j = Z_j Z_i$ for all $1 \leq i, j \leq k$. The
irreducible representations $\lambda$ of $\C \Z_2^k$ are one-dimensional, and
are determined via the vectors $v = (\lambda(z_1),\ldots,\lambda(z_k)) \in \{\pm 1\}^k$.
Conversely, for any vector $v \in \{\pm 1\}^k$ there is a representation $\lambda_v : \Z_2^k
\to \C$ with $\lambda_v(x_i) = v_i$, so $\Z_2^k$ has $2^k$ irreducible representations (up to 
isomorphism). If $v \in \{\pm 1\}^k$, we let 
\begin{equation*}
    \Pi_v = \prod_{i=1}^k \tfrac{1}{2} (1 + v_i z_i) = \frac{1}{2^k} \sum_{x \in \Z_2^k} \lambda_v(x) x 
\end{equation*}
be the central projection in $\C \Z_2^k$ corresponding to $\lambda_v$. These
projections satisfy the identities $\Pi_v^* = \Pi_v = \Pi_v^2$ for all $v \in
\{\pm 1\}^k$, $\Pi_v \Pi_{v'} = 0$ for all $v \neq v' \in \{\pm 1\}^k$, and
$\sum_{v \in \{\pm 1\}^k} \Pi_v = 1$ in $\C \Z_2^k$. Hence if $\psi : \C \Z_2^k
\to \End(W)$ is a representation, the operators $\{\psi(\Pi_v)\}_{v \in
\{\pm 1\}^k}$ form a complete orthogonal family of projections. In particular, 
\begin{equation*}
    W = \bigoplus_{v \in \{\pm 1\}^k} \psi(\Pi_v) W
\end{equation*}
as an direct sum, and the subspaces are orthogonal if $W$ is a
$*$-representation.  If $x \in \C \Z_2^k$ and $w \in W_v := \psi(\Pi_v) W$,
then $\psi(x) w = \lambda_v(x) w$. Hence $\psi(x) = \oplus_{v \in \{\pm
1\}^k} \lambda_v(x) \Id_{W_v}$ for all $x \in \C \Z_2^k$, so $\psi$ is diagonal
with respect to this subspace decomposition.  This leads to the following
standard definition:

\begin{definition} 
    Let $\psi:\C\Z_2 ^k\to \End(W)$ be a representation of $\C\Z_2^k$ on a vector
    space $W$. The \textbf{joint spectrum} of $\psi$ is the set
    \begin{equation*}\mcJ_\psi=\{v\in \{\pm1\}^k:\psi(\Pi_v)\neq 0\}.
    \end{equation*} 
\end{definition}
In other words, $\mcJ_{\psi}$ is the set of vectors $v \in \{\pm 1\}^k$ for
which the subspace $W_v$ is non-zero.

\subsection{Boolean constraint systems}

To match with conventions from the previous section, in this paper we represent
$\Z_2$ in multiplicative form as $\{\pm 1\}$, rather than $\{0,1\}$. We also
use this convention for boolean truth values, meaning we think of $-1$ as
$\TRUE$ and $1$ as $\FALSE$. A \textbf{boolean relation of arity $k>0$} is a
subset of $\{\pm 1\}^k$. The \textbf{indicator function} of a relation $R
\subset \{\pm 1 \}^k$ is the function $f_R : \{\pm 1\}^k \to \{\pm 1\}$ sending
$x \mapsto -1$ if $x \in R$, and $x \mapsto 1$ otherwise.  Given a set of
variables $X=\{x_1,\ldots,x_n\}$, a \textbf{constraint} $C$ on $X$ is a pair
$(S,R)$, where the \textbf{scope} $S = (s_1,\ldots,s_k)$ is a sequence of
length $k \geq 1$ over $X \cup \{\pm 1\}$, and $R$ is a $k$-ary relation. Note
that this definition of scope allows us to substitute constants for variables
in the relation $R$.  A \textbf{boolean constraint system (BCS)} is a pair $(X,
\{C_i\}_{i=1}^m)$, where $X$ is a finite set of variables, and
$\{C_i\}_{i=1}^m$ is a finite set of constraints on $X$.

For practical reasons, we will often write relations and constraints informally
in the standard short-hand, using $\vee$ for logical OR and $\wedge$ for
logical AND. For instance, $x_1 \vee x_2 \vee x_3 = \TRUE$ could refer to the
relation $R = \{\pm 1\}^3 \setminus \{(1,1,1)\}$, or to the constraint
$((x_1,x_2,x_3), R)$. 
Also, if $S$ is a scope, we
abuse notation slightly and use $X \cap S$ to refer to the set of variables
listed in $S$.  Note a peculiarity of using multiplicative notation for $\Z_2$
is that the XOR $x_1 \oplus x_2$ is written as the product $x_1 x_2$, so for
instance $\{x_1 x_2 x_3 = -1, x_2 x_3 = 1, x_1 x_2 = 1\}$ is actually a linear
system, despite initial appearances. Similarly, $-x_i$ is the logical negation
of the variable $x_i$.

An \textbf{assignment} to a set of variables $X$ is a function $\phi : X \to
\{\pm 1\}$.  If $S = (s_1,\ldots,s_k)$ is a sequence over $X \cup \{\pm 1\}$,
we set $\phi(S) = (\phi(s_1),\ldots,\phi(s_k)) \in \{\pm 1\}^k$, where we
extend $\phi$ to $\{\pm 1\}$ as the identity function. If $(X, \{(S_i,
R_i)\}_{i=1}^m)$ is a BCS, then an assignment $\phi$ to $X$ is a
\textbf{satisfying assignment} if $\phi(S_i) \in R_i$ for all $1 \leq i \leq
m$, or equivalently if $f_{R_i}(\phi(S_i)) = -1$ for all $1 \leq i \leq m$.
An assignment which is not a satisfying assignment will be called a
\textbf{non-satisfying assignment}.  A BCS is said to be \textbf{satisfiable}
if it has a satisfying assignment. 

A \textbf{boolean constraint language} $\mcL$ is a collection of relations with
possibly different arity's.  We say that a BCS $B$ is a \textbf{BCS over
$\mcL$} if every relation in $B$ belongs to $\mcL$. Constraint languages allow
us to talk about constraint systems where the relations are of a certain form.
For example, a 3SAT instance is a BCS over the constraint language containing
the relations $(\pm x_1) \vee (\pm x_2) \vee (\pm x_3) = \TRUE$. 
Other examples of constraint languages will be given in \Cref{sec:define}.

\section{Boolean constraint system algebras and games}\label{sec:bcs_algebras}

To define boolean constraint system algebras, we first extend the definition
of a boolean constraint system slightly:
\begin{defn}
    A \textbf{boolean constraint system (BCS) with contexts} is a tuple $(X,
    \{(U_i,V_i)\}_{i=1}^\ell)$, where $X$ is a finite set of variables, and
    $(U_i,V_i)$ is a constraint system on variables $U_i \subseteq X$ for all
    $1 \leq i \leq \ell$.  The sets $U_i$, $1 \leq i \leq \ell$ are called the
    \textbf{contexts} of the system. 

    If $\mcL$ is a constraint language, then a \textbf{BCS with contexts over
    $\mcL$} is a BCS with contexts $(X, \{(U_i,V_i)\}_{i=1}^\ell)$ in which
    $(U_i,V_i)$ is a BCS over $\mcL$ for all $1 \leq i \leq \ell$.
\end{defn}
Intuitively, the idea behind this definition is that variables are grouped into
contexts, and constraints can only be placed on variables in the same context.
Given a BCS $B = (X,V)$, we can always add contexts to make it into a BCS with
contexts. Typically there is more than one way that this can be done.  For
instance, we could group all the variables together into a single context to
get $(X,\{(X,V)\})$. At the other end of the spectrum, we could add a separate context
for each constraint, containing only the variables in that constraint. In this
case, if $V = \{(S_i,R_i)\}_{i=1}^{\ell}$, then the BCS with contexts is
$(X, \{(U_i,V_i)\})$, where $U_i \subseteq X$ is the set of variables appearing
in $S_i$, and $V_i = \{(S_i,R_i)\}$. We use this option as the default option
when thinking of a BCS as a BCS with contexts. 
\begin{definition}\label{defn:bcs_algebra}
    Let $B = (X, \{U_i,V_i\}_{i=1}^\ell)$ be a BCS with contexts. Let
    $\msA_{con}(B)$ be the finitely-presented $*$-algebra generated by $X$ and
    subject to the relations
    \begin{enumerate}
        \item $x^2=1$ and $x^*=x$ for all $x \in X$, and
        \item $xy = yx$ for all $x,y \in U_i$, $1 \leq i \leq m$.
    \end{enumerate}
    For $1 \leq i \leq \ell$ and $\phi$ an assignment to $U_i$, let 
    $\Pi_{U_i,\phi}$ denote the projection
    \begin{equation*}
        \prod_{x \in U_i} \tfrac{1}{2} (1+\phi(x) x)
    \end{equation*}
    in $\msA_{con}(B)$. The \textbf{boolean constraint system algebra} $\msA(B)$ is the
    quotient of $\msA_{con}(B)$ by the relations
    \begin{enumerate} 
        \setcounter{enumi}{2}
        \item $\Pi_{U_i,\phi} = 0$ for all $1 \leq i \leq \ell$ and non-satisfying assignments $\phi$ for $(U_i,V_i)$. 
    \end{enumerate}
\end{definition}
The relations in (1) imply that $\msA(B)$ is a semi-pre-$C^*$-algebra. 
As mentioned in the preliminaries, a $*$-representation of $\msA_{con}(B)$ is an
assignment $x \mapsto \psi(x)$ of operators to every variable $x \in B$, such
that $\psi(x)^* = \psi(x)$ and $\psi(x)^2 = 1$ for all $x \in X$, and $\psi(x)
\psi(y) = \psi(y)\psi(x)$ for all $x,y \in U_i$ and $1 \leq i \leq \ell$. If
$\psi$ is a $*$-representation of $\msA_{con}(B)$, then for any $1 \leq i \leq
\ell$, the operators $\psi(x)$, $x \in U_i$ form a $*$-representation of
$\Z_2^{U_i}$. The $*$-subalgebra of $\msA_{con}(B)$ generated by $U_i$ is
isomorphic to the group algebra of $\Z_2^{U_i}$. An assignment $\phi$ to $U_i$
is equivalent to an irreducible representation of $\Z_2^{U_i}$, and
$\Pi_{U_i,\phi}$ is the central projection in $\C \Z_2^{U_i}$ corresponding to
$\phi$. Hence a representation $\psi$ of $\msA_{con}(B)$ induces a representation
of $\msA(B)$ if and only if for all $1 \leq i \leq \ell$, the joint spectrum
$\mcJ_{\psi_{U_i}}$ is contained in the set of satisfying assignments for
$(U_i,V_i)$. In particular, a one-dimensional $*$-representation of $\msA(B)$
is the same thing as a satisfying assignment for $B$. 

More generally, if $B = (X, (X,V))$ is a BCS with all the variables in one context, then
$\msA(B)$ is commutative, and a $*$-representation of $\msA(B)$ is a direct sum
of satisfying assignments for $(X,V)$. However, with more contexts it's
possible to have $*$-representations even when there are no one-dimensional
representations. The Mermin-Peres magic square is a famous example:
\begin{example}\label{ex:mp_square}
    The Mermin-Peres magic square is the constraint system $B$ over $X =
    \{x_1,\ldots,x_9\}$ with constraints $x_1 x_2 x_3 = 1$, $x_4 x_5 x_6 = 1$,
    $x_7 x_8 x_9 = 1$, $x_1 x_4 x_7 = -1$, $x_2 x_5 x_8 = -1$, and $x_3 x_6 x_9
    = -1$.  These constraints arise from putting the variables $x_1,\ldots,x_9$
    in a $3 \times 3$ grid 
    \begin{center}
    \begin{tabular}{|c|c|c|}
        \hline
        $x_1$ & $x_2$ & $x_3$ \\
        \hline
        $x_4$ & $x_5$ & $x_6$ \\
        \hline
        $x_7$ & $x_8$ & $x_9$ \\
        \hline
    \end{tabular}\ \ ,
    \end{center}
    and requiring that the row products are $1$ and the column products are
    $-1$. If we think of this BCS as a BCS with contexts in the default way
    (adding a context for each constraint), then $\msA(B)$ does not have
    a one-dimensional $*$-representation, but does have a $*$-representation
    in dimension $4$ \cite{Mer90}.
\end{example}

Note that if we are working over the constraint language of all relations, then
it is somewhat redundant to explicitly specify contexts. Indeed, given a BCS
with contexts $B = (X, \{(U_i,V_i)\}_{i=1}^{\ell})$, let $C_i$ be the
constraint $(S_i, R_i)$, where $S_i$ is an enumeration of $U_i$, and $R_i$ is
the set of satisfying assignments to $(U_i,V_i)$. Let $B' = (X,
\{C_i\}_{i=1}^{\ell})$, an ordinary BCS without contexts. If we regard $B'$ as
a BCS with contexts in the default way mentioned above, then $\msA(B) =
\msA(B')$, so we can always assume that the contexts are defined implicitly
from the relations. However, as the following example shows, being able to
explicitly specify contexts is convenient when working over more restrictive
constraint languages: 
\begin{example}\label{ex:commute1}
    Let $X = \{x_1,x_2,x_3,x_4\}$, and let $B$ be the 3SAT instance $(X, V)$,
    where $V = \{x_1 \vee x_2 \vee x_3=\TRUE, x_2 \vee x_3 \vee x_4=\TRUE\}$.
    Then $x_1$ and $x_4$ do not commute in $\msA(B)$.  If we want them to
    commute, we can instead use the BCS with contexts $B' = (X, \{(X, V)\})$.
    We also have $\msA(B') = \msA(B'')$ where $B'' = (X, \{(x_1 \vee x_2 \vee
    x_3) \wedge (x_2 \vee x_3 \vee x_4)\}$ is a BCS without contexts. However,
    $B''$ is not a 3SAT instance.
\end{example}

We can also consider other presentations of $\msA(B)$. Recall that if $S$ is the scope of a
constraint over variables $X$, then $X \cap S$ refers to the set of variables
in $X$.
\begin{lemma}\label{lem:presperconstraint}
    Suppose $B = (X, \{(U_i,V_i)\}_{i=1}^{\ell})$ is a BCS with contexts. Given a 
    constraint $C = (S,R) \in V_i$, $1 \leq i \leq \ell$ and an assignment $\phi$
    to $X \cap S$, let 
    \begin{equation*}
        \Pi_{C,\phi} = \prod_{x \in S \cap X} \tfrac{1}{2} (1 + \phi(x) x).
    \end{equation*}
    Then $\msA(B)$ is the quotient of $\msA_{con}(B)$ by the relations $\Pi_{C,\phi} = 0$
    for all $C = (S,R) \in V_i$, $1 \leq i \leq \ell$, and $\phi$ an assignment to
    $X \cap S$ such that $\phi(S) \not\in R$. 
\end{lemma}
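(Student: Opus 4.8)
The plan is to prove that the two-sided ideal $I_1\subseteq\msA_{con}(B)$ generated by the relations of \Cref{defn:bcs_algebra}(3) --- that is, by $\{\Pi_{U_i,\phi}: 1\le i\le\ell,\ \phi\text{ a non-satisfying assignment for }(U_i,V_i)\}$ --- coincides with the two-sided ideal $I_2$ generated by $\{\Pi_{C,\phi}: C=(S,R)\in V_i,\ \phi\text{ an assignment to }X\cap S\text{ with }\phi(S)\notin R\}$. Once $I_1=I_2$ is shown, the quotients $\msA_{con}(B)/I_1$ and $\msA_{con}(B)/I_2$ are literally the same algebra, which is exactly the assertion of the lemma.

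The only tool needed is elementary manipulation of commuting idempotents. Fix a context $U=U_i$. By relations (1) and (2) of \Cref{defn:bcs_algebra}, for each $x\in U$ and $\epsilon\in\{\pm1\}$ the element $p_x^{\epsilon}:=\tfrac12(1+\epsilon x)$ is a self-adjoint idempotent of $\msA_{con}(B)$, these idempotents commute with one another as $x$ ranges over $U$, and $p_x^{+1}+p_x^{-1}=1$. For an assignment $\phi$ to $U$ we have $\Pi_{U,\phi}=\prod_{x\in U}p_x^{\phi(x)}$, and for a constraint $C=(S,R)\in V_i$ (note $X\cap S\subseteq U$, since $(U_i,V_i)$ is a system on the variables $U_i$) and an assignment $\psi$ to $X\cap S$ we have $\Pi_{C,\psi}=\prod_{x\in X\cap S}p_x^{\psi(x)}$. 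Inserting $1=p_y^{+1}+p_y^{-1}$ for each $y\in U\setminus(X\cap S)$ and expanding yields two identities valid in $\msA_{con}(B)$: first, $\Pi_{C,\psi}=\sum_{\phi}\Pi_{U,\phi}$, the sum over all assignments $\phi$ to $U$ with $\phi|_{X\cap S}=\psi$; second, for any such $\phi$, $\Pi_{U,\phi}=\Pi_{C,\psi}\,\Pi_{U,\phi}=\Pi_{U,\phi}\,\Pi_{C,\psi}$. I also record the combinatorial point that $\phi(S)$ depends only on $\phi|_{X\cap S}$ (constants occurring in $S$ are fixed by every assignment), so an assignment $\phi$ to $U_i$ fails to satisfy $(U_i,V_i)$ if and only if there is a constraint $C=(S,R)\in V_i$ with $(\phi|_{X\cap S})(S)\notin R$.

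With these in hand both inclusions are short. To see $I_2\subseteq I_1$, take a generator $\Pi_{C,\psi}$ of $I_2$ with $C=(S,R)\in V_i$ and $\psi(S)\notin R$; then $\Pi_{C,\psi}=\sum_{\phi}\Pi_{U_i,\phi}$ over the assignments $\phi$ to $U_i$ extending $\psi$, and each such $\phi$ has $\phi(S)=\psi(S)\notin R$, hence is non-satisfying for $(U_i,V_i)$, so $\Pi_{U_i,\phi}\in I_1$ and therefore $\Pi_{C,\psi}\in I_1$. To see $I_1\subseteq I_2$, take a generator $\Pi_{U_i,\phi}$ of $I_1$ with $\phi$ non-satisfying for $(U_i,V_i)$; choose a violated constraint $C=(S,R)\in V_i$ with $\phi(S)\notin R$ and set $\psi=\phi|_{X\cap S}$, so that $\psi(S)=\phi(S)\notin R$ and $\Pi_{C,\psi}$ is a generator of $I_2$. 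Then $\Pi_{U_i,\phi}=\Pi_{C,\psi}\,\Pi_{U_i,\phi}\in I_2$ because $I_2$ is a (left) ideal. Hence $I_1=I_2$.

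There is no genuine obstacle here; the content is entirely bookkeeping. The one point deserving care is that the displayed projection identities must be justified inside $\msA_{con}(B)$ itself rather than in a representation --- this is legitimate because the $p_x^{\pm1}$, $x\in U$, are honest pairwise-commuting idempotents by the defining relations, so one computes with them exactly as with mutually orthogonal spectral projections (and one does not even need the sharper fact, noted in the text, that the subalgebra generated by $U_i$ is $\C\Z_2^{U_i}$). The secondary thing to keep straight is the two-way translation between ``$\phi$ does not satisfy the context $(U_i,V_i)$'' and ``$\phi$ restricted to some scope violates that constraint'', which is precisely the combinatorial observation recorded above.
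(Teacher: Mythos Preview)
Your proof is correct and follows essentially the same approach as the paper: both directions use exactly the two identities you record, namely $\Pi_{C,\psi}=\sum_{\phi|_{X\cap S}=\psi}\Pi_{U_i,\phi}$ to show $I_2\subseteq I_1$, and the factorization $\Pi_{U_i,\phi}=\Pi_{C,\psi}\,\Pi_{U_i,\phi}$ (with $\psi=\phi|_{X\cap S}$ for a violated constraint) to show $I_1\subseteq I_2$. Your write-up is more explicit about the commuting-idempotent bookkeeping, but the argument is the same.
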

\begin{proof}
    If $\phi$ is a non-satisfying assignment to $(U_i, V_i)$, then $\phi(S) \not\in R$
    for some $C = (S,R) \in V_i$, and $\Pi_{U_i,\phi}$
    %\begin{equation*}
    %    \Pi_{U_i,\phi} = \prod_{x \in U_i \setminus (X \cap S)} \tfrac{1}{2} (1 + \phi(x) x) \cdot \Pi_{C,\phi|_{X \cap S}}
    %\end{equation*}
    is in the ideal generated by $\Pi_{C,\phi|_{X \cap S}}$.

    Conversely, suppose $C = (S,R) \in V_i$ and $\phi$ is an assignment to $X
    \cap S$ such that $\phi(S) \not\in R$. If $\wtd{\phi}$ is an assignment to 
    $U_i$ such that $\wtd{\phi}|_{X \cap S} = \phi$, then $\wtd{\phi}$ is a non-satisfying
    assignment to $(U_i, V_i)$. Hence
    \begin{equation*}
        \Pi_{C,\phi} = \sum_{\wtd{\phi}|_{X \cap S} = \phi} \Pi_{U_i,\wtd{\phi}}
    \end{equation*}
    is in the ideal generated by $\Pi_{U_i,\wtd{\phi}}$ for $\wtd{\phi}$ a non-satisfying
    assignment to $(U_i,V_i)$. 
\end{proof}
By thinking of the elements of $\C \Z_2^k$ as polynomials in the variables $z_1,\ldots,z_k$,
we can identify $\C \Z_2^k$ with the space of functions $\{\pm 1\}^k \to \C$. Specifically,
if $f : \{\pm 1\}^k \to \C$ is a function, then the corresponding element of $\C \Z_2^k$ is
\begin{equation*}
    P =\sum_{v\in \{\pm1\}^k} f(v)\prod_{i=1}^k \frac{\left(1+v_iz_i\right)}{2}.
\end{equation*}
When $f$ is the indicator function $f_R$ of a relation $R \subseteq \{\pm 1\}$,
we denote this element by $P_R$. If $C = (S,R)$ is a constraint over variables
$X$, the evaluation $P_R(S)$ of $P_R$ at $S = (s_1,\ldots,s_k)$ is the element
of $\C \Z_2^{X \cap S}$ that we get from replacing $x_i$ in $P_R$ with $s_i$
for all $1 \leq i \leq k$. It is not hard to see that 
\begin{equation*}
    P_R(S) = \sum_{\phi(S) \not\in R} \Pi_{C,\phi} - \sum_{\phi(S) \in R} \Pi_{C,\phi},
\end{equation*}
where the sums are over assignments $\phi$ to $X \cap S$, and $\Pi_{C,\phi}$ is
defined in \Cref{lem:presperconstraint}. Since $\sum_{\phi} \Pi_{C,\phi} = 1$, we conclude
that $1 + P_R(S) = 2 \sum_{\phi(S) \not\in R} \Pi_{C,\phi}$. This leads to the standard
presentation of $\msA(B)$ found in previous work:
\begin{cor}\label{cor:relationpres}
    Suppose $B = (X, \{(U_i,V_i)\}_{i=1}^{\ell})$ is a BCS with contexts. Then $\msA(B)$ is
    the quotient of $\msA_{con}(B)$ by the relations $P_R(S) = -1$ for all $C = (S,R) \in V_i$,
    $1 \leq i \leq \ell$.
\end{cor}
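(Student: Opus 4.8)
The plan is to show that the two‑sided ideal of $\msA_{con}(B)$ generated by the elements $P_R(S)+1$, over all $C=(S,R)\in V_i$ and $1\le i\le \ell$, coincides with the ideal generated by the elements $\Pi_{C,\phi}$ over all such $C$ and all assignments $\phi$ to $X\cap S$ with $\phi(S)\notin R$. Once this is established, \Cref{lem:presperconstraint} identifies the quotient by the latter ideal with $\msA(B)$, and we are done. The identity already recorded in the paragraph preceding the corollary,
\begin{equation*}
    1 + P_R(S) = 2\sum_{\phi(S)\notin R}\Pi_{C,\phi},
\end{equation*}
is the bridge between the two generating sets, so everything reduces to a statement about ideals generated by a sum of pairwise orthogonal projections versus the individual projections.

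First I would note the structural point that makes the argument go through: for a fixed constraint $C=(S,R)\in V_i$, all the variables listed in $S$ lie in $U_i$, so by relation (2) in \Cref{defn:bcs_algebra} they pairwise commute in $\msA_{con}(B)$, and the $*$-subalgebra they generate is a quotient of $\C\Z_2^{X\cap S}$. Consequently the relations $\Pi_{C,\phi}^* = \Pi_{C,\phi} = \Pi_{C,\phi}^2$, $\Pi_{C,\phi}\Pi_{C,\phi'} = 0$ for $\phi\neq\phi'$, and $\sum_\phi \Pi_{C,\phi} = 1$ all hold already in $\msA_{con}(B)$, not merely in some quotient. This is what lets the ideal manipulations below be carried out.

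Next, the forward inclusion: working modulo the ideal $J$ generated by the $P_R(S)+1$, the displayed identity gives $\sum_{\phi(S)\notin R}\Pi_{C,\phi}\in J$ (dividing by $2$ is harmless over $\C$). Multiplying this element on the left by $\Pi_{C,\psi}$ for a fixed assignment $\psi$ to $X\cap S$ with $\psi(S)\notin R$, and using the orthogonality and idempotency relations from the previous paragraph, collapses the sum to the single term $\Pi_{C,\psi}$, so $\Pi_{C,\psi}\in J$. Ranging over all such $C$ and $\psi$ shows $J$ contains the ideal of \Cref{lem:presperconstraint}. The reverse inclusion is immediate from the displayed identity, which exhibits $P_R(S)+1$ as a $\C$‑linear combination of the $\Pi_{C,\phi}$ with $\phi(S)\notin R$, hence as an element of the ideal they generate. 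Therefore the two ideals are equal, and by \Cref{lem:presperconstraint} the quotient of $\msA_{con}(B)$ by either one is $\msA(B)$; but the relation $P_R(S)+1 = 0$ is the same as $P_R(S) = -1$, which is the asserted presentation.

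I do not expect a genuine obstacle here: the only point requiring care is the one flagged above, namely that the orthogonality relations among the $\Pi_{C,\phi}$ for a fixed $C$ hold in $\msA_{con}(B)$ itself (so that multiplying the relation $\sum_{\phi(S)\notin R}\Pi_{C,\phi}=0$ by a single $\Pi_{C,\psi}$ legitimately isolates that term inside the quotient). One should also mention in passing that distinct constraints $C, C'$ may involve overlapping variable sets, so the projections attached to different constraints need not be orthogonal — but this is irrelevant, since the argument only ever multiplies within the commuting family coming from one constraint at a time.
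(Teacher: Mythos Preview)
Your proposal is correct and follows essentially the same route as the paper's proof: both use the identity $1+P_R(S)=2\sum_{\phi(S)\notin R}\Pi_{C,\phi}$ together with the orthogonality of the $\Pi_{C,\phi}$ (for fixed $C$) in $\msA_{con}(B)$ to show the two ideals coincide, then invoke \Cref{lem:presperconstraint}. The paper's version compresses your ``multiply by $\Pi_{C,\psi}$'' step into the single line $\Pi_{C,\phi}=\tfrac{1}{2}\Pi_{C,\phi}(1+P_R(S))$, but the content is identical.
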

\begin{proof}
    The argument above shows that $1+P_R(S)$ is in the ideal generated by the projections $\Pi_{C,\phi}$
    with $\phi$ an assignment to $X \cap S$ such that $\phi(S) \not\in R$. Conversely, the projections
    $\Pi_{C,\phi}$ for $\phi$ an assignment to $X \cap S$ are orthogonal. Hence if $\phi(S) \not\in R$,
    then $\Pi_{C,\phi} = \tfrac{1}{2} \Pi_{C,\phi} (1 + P_R(S))$ is in the ideal generated by $1+P_R(S)$.
    The Corollary follows from \Cref{lem:presperconstraint}.
\end{proof}

\begin{example}\label{ex:AND}
    If $R$ is the AND relation $x \land y = \TRUE$, then $P_R(z_1,z_2) =
    \tfrac{1}{2}(1 + z_1 + z_2 - z_1z_2)$. If $B$ is the BCS with two variables $z_1,z_2$ and
    the single relation $R$, then $\msA(B) = \C^*\langle z_1,z_2 : z_i^*=z_i, z_i^2=1, i=1,2, P_R(z_1,z_2)=-1\rangle$.
\end{example}

\begin{example}\label{ex:zAND}
    Let $R'$ be the relation $z = x \land y$. Then $P_{R'}(x,y,z) =
    -\tfrac{1}{2} z (1 + x + y - xy)$, so the relation $P_{R'}(x,y,z) = -1$ 
    is equivalent to the relation $z = \tfrac{1}{2} (1 + x + y - xy) = P_R(x,y)$,
    where $R$ is the AND relation from the previous example.
\end{example}

\begin{example}\label{ex:linear}
    Suppose $Ax=b$ is an $m \times n$ linear system over $\Z_2$, written in
    additive notation (a.k.a.~the normal way of writing linear systems). Let
    $B = (X, \{(U_i,\{C_i\})\}_{i=1}^{m})$ be the corresponding BCS with
    contexts, so $X = \{x_1,\ldots,x_n\}$ is the set of variables, $U_i = \{x_j
    : A_{ij} \neq 0\}$ is the set of variables in equation $i$, and $C_i =
    (S_i,R_i)$ is the $i$th equation of the system, which written
    multiplicatively is $x_1^{A_{i1}} \cdots x_{n}^{A_{in}} = (-1)^{b_i}$. Then
    $P_{R_i}(S_i) = (-1)^{b_i+1} x_1^{A_{i1}} \cdots x_{n}^{A_{in}}$, and
    $\msA(B)$ is the finitely-presented $*$-algebra generated by
    $x_1,\ldots,x_n$ subject to the relations
    \begin{enumerate}[(a)]
        \item $x_i^2=1$ and $x_i^*=x_i$ for all $1 \leq i \leq n$,
        \item $x_j x_k = x_k x_j$ for all $x_j, x_k \in U_i$, $1 \leq i \leq m$, and
        \item $x_1^{A_{i1}} \cdots x_{n}^{A_{in}} = (-1)^{b_i}$ for all $1 \leq i \leq m$.
    \end{enumerate}
    Note that these relations are very close to the relations for a group
    algebra. The \emph{solution group} $\Gamma(A,b)$ of $Ax=b$, defined in
    \cite{CLS17}, is the finitely-presented group generated by $X \cup \{J\}$
    subject to the relations
    \begin{enumerate}[(1)]
        \item $J^2 = 1$ and $x_i J = J x_i$ for all $1 \leq i \leq n$,
        \item $x_i^2=1$ and $x_i^*=x_i$ for all $1 \leq i \leq n$,
        \item $x_j x_k = x_k x_j$ for all $x_j, x_k \in U_i$, $1 \leq i \leq m$, and
        \item $x_1^{A_{i1}} \cdots x_{n}^{A_{in}} = J^{b_i}$ for all $1 \leq m$.
    \end{enumerate}    
    Comparing the two relations for $\Gamma(A,b)$ to the relations for
    $\msA(B)$, we see that $\msA(B) = \C \Gamma(A,b) / \langle J=-1\rangle$. 
\end{example}

\begin{example}\label{ex:commute2}
    If we want to avoid explicitly specifying contexts, we can often add
    ancillary variables and relations. For instance, if we want to make two
    variables $x_1$ and $x_2$ of $B$ commute in $\msA(B)$, we could add a third
    variable $x_3$ and the linear relation $x_1 x_2 x_3 = 1$. Since this
    relation forces $x_3 = x_1 x_2$, the only effect of this relation (assuming
    that $x_3$ is not used in any other relations) is to force $x_1$ and $x_2$
    to commute. 
\end{example}

\subsection{Contextuality scenarios and nonlocal games}

Let $B = (X,\{(U_i,V_i)\}_{i=1}^{\ell})$ be a BCS with contexts, and suppose
$\psi$ is a $*$-representation of $\msA_{con}(B)$ on some Hilbert space $H$.
For any $1 \leq i \leq \ell$, the operators $\psi(x)$, $x \in  U_i$ are
jointly-measurable $\pm 1$-valued observables, with joint outcomes
corresponding to assignments $\phi$ to $X \cap S$. This type of measurement
scenario, in which a bunch of observables are grouped into contexts, 
observables in the same context are jointly measurable, observables from
different contexts are not necessarily jointly measurable, and observables can
belong to more than one context, is called a \textbf{contextuality scenario} (see, e.g. \cite{AFLS15}).
This physical interpretation of $*$-representations of $\msA_{con}(B)$ goes
back to the original papers of Mermin and Peres \cite{Mer90,Per90}.
If $\psi$ is a $*$-representation of $\msA(B)$, then the outcome of measuring
the operators $\psi(x)$, $x \in U_i$ (with respect to any state) is a
satisfying assignment to $(U_i,V_i)$.  If $B$ does not have a satisfying
assignment (or in other words, a one-dimensional $*$-representation), then the
operators $\psi(x), x \in X$ are said to be \textbf{contextual}, since the
measured value of $\psi(x)$ seems to depend on what context it is measured in.

Another physical interpretation of $*$-representations of $\msA(B)$ is provided
by the \textbf{BCS nonlocal game} $\mcG(B)$ associated to $B$. In this game, two players
(commonly called Alice and Bob) are each given an input $1 \leq i \leq \ell$,
and must respond with an assignment $\phi$ for $U_i$. If Alice
and Bob receive inputs $i$ and $j$ respectively, and respond with outputs
$\phi_A$ and $\phi_B$, then they win if $\phi_A$ and $\phi_B$ are satisfying
assignments for $(U_i, V_i)$ and $(U_j, V_j)$ respectively, and $\phi_A|_{U_i
\cap U_j} = \phi_B|_{U_i \cap U_j}$.  If this condition is not satisfied, then
they lose. The players are cooperating to win, and they know the rules and can
decide on a strategy ahead of time.  However, they are not able to communicate
once the game is in progress (so in particular, Alice does not know which
context Bob received, and vice-versa). 

There are different types of strategies Alice and Bob might use, depending on
the resource they have access to. A strategy is \textbf{classical}
if Alice and Bob have access to shared randomness; \textbf{quantum} if Alice
and Bob share a finite-dimensional bipartite entangled quantum state;
\textbf{quantum-approximable} if the strategy is a limit of quantum strategies;
and \textbf{commuting-operator} if Alice and Bob share a quantum state in a
possibly infinite-dimensional Hilbert space, and Alice's measurement operators
have to commute with Bob's operators (rather than acting on separate tensor
factors). We refer to \cite{CM14,LMPRSSTW20} for more background on quantum strategies.  

A \textbf{perfect strategy} for a nonlocal game is a strategy which wins on
every pair of inputs. If $B$ has a satisfying assignment, then $\mcG(B)$ has a
classical perfect strategy. Indeed, Alice and Bob can agree on a satisfying
assignment $\phi$ ahead of time, and respond with $\phi|_{U_i}$ on input $i$.
It turns out that $\mcG(B)$ has a perfect classical strategy if and only if $B$
has a satisfying assignment. The following theorem describes the relationship
between perfect strategies for $\mcG(B)$, and $*$-representations of the BCS
algebra $\msA(B)$. 
\begin{theorem}[\cite{CM14,PSSTW16,KPS18}]\label{thm:BCS_t-strats}
    Let $B$ be a BCS with contexts. Then:
\begin{enumerate}
	\item $\mcG(B)$ has a perfect classical strategy if and only if there is a $*$-homomorphism $\msA(B)\to \C$,
	\item $\mcG(B)$ has a perfect quantum strategy if and only if
there is a $*$-homomorphism $\msA(B)\to M_d(\C)$ for some $d\geq
1$,
    \item $\mcG(B)$ has a perfect quantum-approximable strategy if and only if
        there is a $*$-homomorphism $\msA(B)\to \mcR^\mcU$, where $\mcR^{\mcU}$
        is an ultrapower of the hyperfinite $II_1$ factor $\mcR$, and
    \item $\mcG(B)$ has a perfect commuting-operator strategy if and only if
        $\msA(B)$ has a tracial state. 
\end{enumerate} \end{theorem}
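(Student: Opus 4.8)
The plan is to prove the four equivalences in \Cref{thm:BCS_t-strats} by translating, in each case, between a perfect strategy for $\mcG(B)$ and a $*$-representation of $\msA(B)$ of the appropriate type, using the fact established in \Cref{sec:bcs_algebras} that a $*$-representation of $\msA_{con}(B)$ is exactly an assignment of self-adjoint order-two operators to the variables of $B$ such that variables in a common context commute, and that such a representation descends to $\msA(B)$ precisely when, for each context $i$, the joint spectrum of the operators $\{\psi(x)\}_{x \in U_i}$ consists only of satisfying assignments for $(U_i,V_i)$. The strategy-to-algebra direction is the same in all four parts, so I would do it once: given a perfect strategy in which Alice and Bob share a state $\ket{\tau}$ (or a trace, in part (4)) and Alice measures a PVM $\{A^i_\phi\}_\phi$ on input $i$ with outcomes $\phi$ assignments to $U_i$, define $\psi(x) = \sum_{\phi} \phi(x)\, A^i_\phi$ for any context $i$ containing $x$; one checks this is a well-defined self-adjoint order-two observable (independence of the choice of $i$ and commutation within a context follow from the perfect-strategy/synchronicity conditions, exactly as in the synchronous-game dictionary of \cite{PSSTW16,KPS18}), and the winning condition forces $A^i_\phi = 0$ whenever $\phi$ is non-satisfying, which is exactly relation (3) of \Cref{defn:bcs_algebra}, i.e. $\psi(\Pi_{U_i,\phi}) = 0$. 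The target algebra is then read off: $\C$ in the deterministic classical case, $M_d(\C)$ when the shared state lives in $\C^d\otimes\C^d$, $\mcR^{\mcU}$ when the strategy is a limit of finite-dimensional ones (one takes an ultralimit of the matrix models, using that approximate PVMs can be rounded to exact ones, and that the defining relations of $\msA(B)$ are approximately satisfied), and a tracial state on $\msA(B)$ in the commuting-operator case (the trace being $a \mapsto \braket{\tau|\psi(a)|\tau}$, which is tracial by the usual argument that Alice's and Bob's algebras commute and the state is cyclic-symmetric).

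For the converse (algebra-to-strategy) direction I would likewise handle all four parts uniformly. Given a $*$-representation $\psi : \msA(B) \to \mcB(H)$, for each context $i$ set $A^i_\phi = B^i_\phi = \psi(\Pi_{U_i,\phi})$; by the discussion preceding \Cref{ex:mp_square} these form a PVM for each $i$, supported on satisfying assignments of $(U_i,V_i)$ because relation (3) kills the non-satisfying projections. For a tracial state $\tau$ on $\msA(B)$ (part (4)), the GNS construction produces $H = L^2(\msA(B),\tau)$ with $\msA(B)$ acting by left multiplication and its commutant containing the right action, giving a commuting-operator strategy with Alice using left multiplication by $\Pi_{U_i,\phi}$ and Bob using right multiplication; trac_iality plus the synchronous-type argument gives that on input pair $(i,j)$ the probability of outputs $(\phi_A,\phi_B)$ with $\phi_A|_{U_i\cap U_j}\neq\phi_B|_{U_i\cap U_j}$ is zero, so the strategy is perfect. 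Parts (1)–(3) are the specializations: a $*$-homomorphism to $\C$ is a one-dimensional representation, i.e. a satisfying assignment, giving a deterministic classical strategy; a $*$-homomorphism to $M_d(\C)$ gives a perfect quantum strategy on the maximally entangled state $\ket{\psi_d} \in \C^d\otimes\C^d$ with Alice measuring $\{\psi(\Pi_{U_i,\phi})\}$ and Bob measuring the transposes, where the identity $(A\otimes\Id)\ket{\psi_d} = (\Id\otimes A^T)\ket{\psi_d}$ converts the algebraic relations into the winning condition; and a $*$-homomorphism to $\mcR^{\mcU}$, combined with the fact that $\mcR^{\mcU}$ has a trace and is an ultraproduct of matrix algebras, yields a quantum-approximable strategy by extracting finite-dimensional approximants.

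The main obstacle, and the place I would spend the most care, is the matching up of the synchronicity/commutation bookkeeping in the two hardest parts: the $\mcR^{\mcU}$ case (part (3)) and the commuting-operator case (part (4)). In part (3) one must be careful that a homomorphism $\msA(B)\to\mcR^{\mcU}$ need not lift to a sequence of honest homomorphisms into matrix algebras — only approximate ones — so extracting an actual finite-dimensional \emph{perfect} strategy requires that winning probability $1$ in the quantum-approximable model is witnessed by a \emph{sequence} of strategies with winning probability tending to $1$, not a single one; this is really the definition of quantum-approximable, so the content is checking that the approximate version of relations (1)–(3) (operators close to self-adjoint order-two, close to commuting within contexts, with the non-satisfying joint projections close to $0$) translates into a strategy with value close to $1$, which is a standard but slightly delicate $\varepsilon$-management argument. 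In part (4) the subtlety is verifying that the GNS strategy is genuinely a commuting-operator strategy and perfect: one needs the trace to be a \emph{faithful}-enough object that $\tau(\Pi_{U_i,\phi}\Pi_{U_j,\phi'}) = 0$ whenever $\phi,\phi'$ disagree on $U_i\cap U_j$, which follows because on that overlap the projections are finer than, hence orthogonal to, the disagreeing ones — a computation inside $\msA(B)$ using that $\Pi_{U_i,\phi}$ refines $\Pi_{\{x\},\phi(x)}$ for each $x\in U_i$. Everything else is routine translation, and I would present the proof as "the strategy $\leftrightarrow$ representation correspondence" once, then specialize.
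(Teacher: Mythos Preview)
Your plan is sound and is exactly the direct argument the paper acknowledges is possible (``It's also possible to prove parts (3) and (4) directly using the techniques in \cite{KPS18, PSSTW16}''), but it is not the route the paper actually takes. The paper instead argues by reduction: parts (1) and (2) are attributed to \cite{CM14}; for parts (3) and (4) it first pads every context to the same size via \Cref{lem:padding} so that $\mcG(B')$ is literally a synchronous game, then proves in \Cref{lem:synchgame} that the synchronous algebra $\msA(\mcG(B'))$ is isomorphic to $\msA(B)$, and finally invokes the synchronous-game theorems of \cite{PSSTW16,KPS18} as black boxes. So the paper's proof is a two-line reduction to existing results, whereas you rebuild the \cite{PSSTW16,KPS18} machinery inside the BCS framework. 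Your version is more self-contained; the paper's is shorter and modular.

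One wrinkle in your sketch is worth flagging. In the commuting-operator direction of part (4), the observable $\psi(x)=\sum_{\phi}\phi(x)A^i_{\phi}$ need not be independent of the context $i$ \emph{as an operator on $H$}: the computation showing $\psi(x,i)=\psi(x,j)$ in \Cref{lem:synchgame} uses the synchronous-algebra relation $e^i_{\phi}e^j_{\phi'}=0$ for incompatible $\phi,\phi'$, which in a perfect strategy holds only after application to $\ket{\tau}$, not as an identity in $\mcB(H)$. So $\psi$ is not a priori a well-defined $*$-homomorphism on $\msA(B)$, and writing the trace as $a\mapsto\braket{\tau|\psi(a)|\tau}$ is premature. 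The standard fix---which is really what the \cite{PSSTW16} argument you cite does---is to define the linear functional directly on monomials in the $A^i_{\phi}$, use $A^i_{\phi}\ket{\tau}=B^i_{\phi}\ket{\tau}$ to establish traciality, and then check (via traciality plus $A^i_{\phi}\ket{\tau}=0$ for non-satisfying $\phi$ and $A^i_{\phi}A^j_{\phi'}\ket{\tau}=0$ for incompatible pairs) that the functional annihilates the ideal generated by the relations of $\msA(B)$. With that adjustment your outline goes through.
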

Although BCS algebras hadn't been invented at that point, parts (1) and (2) of
\Cref{thm:BCS_t-strats} were essentially proved in \cite{CM14}. Part (3)
and part (4) were first proved in \cite{KPS18} and \cite{PSSTW16} respectively for
synchronous games, with $\mcA(B)$ replaced by the synchronous algebra.
The statements for BCS games follow from \Cref{lem:padding} and \Cref{lem:synchgame}
in the next section. It's also possible to prove parts (3) and (4) directly using
the techniques in \cite{KPS18, PSSTW16}.  Background on $\mcR^{\mcU}$ can be found in \cite{CL15}.

A BCS with contexts is said to be \textbf{satisfiable} if it has a satisfying
assignment. \Cref{thm:BCS_t-strats} suggests the following definition:
\begin{definition}\label{def:sat_types}
    A BCS with contexts $B$ is: 
    \begin{enumerate}[(i)]
        \item \textbf{satisfiable} if there is a $*$-representation $\msA(B)\to \C$,
        \item \textbf{matrix-satisfiable} if there is a $*$-representation $\msA(B)\to M_d(\C)$
            for some $d \geq 1$, 
        \item \textbf{$\mcR^\mcU$-satisfiable} if there is a $*$-representation $\msA(B)\to \mcR^\mcU$,
        \item \textbf{tracially-satisfiable} if $\msA(B)$ has a tracial state, 
        \item \textbf{$C^*$-satisfiable} if $\msA(B)$ there is a $*$-representation $\msA(B) \to \mcB(H)$
            for some Hilbert space $H$, and
	\item \textbf{algebraically-satisfiable} if $1\neq 0\in \msA(B)$.
\end{enumerate} \end{definition}
Note that these types of satisfiability are organized from strongest to weakest.
Indeed, satisfiability obviously implies matrix-satisfiability. Since there are
homomorphisms $M_d \to \mcR^{\mcU}$ for all $d \geq 1$, matrix-satisfiability
implies $\mcR^{\mcU}$-satisfiability. Since $\mcR^{\mcU}$ has a tracial
state, $\mcR^{\mcU}$-satisfiability implies tracial-satisfiability. By the GNS
representation theorem, $\msA(B)$ has a $*$-representation $\msA(B) \to
\mcB(H)$ if and only if $\msA(B)$ has a state, so tracial-satisfiability
implies $C^*$-satisfiability. And finally, $C^*$-satisfiability implies
algebraically satisfiability.  Although $C^*$ and algebraic satisfiability
aren't linked to BCS games, $C^*$-satisfiable constraint systems do give rise
to quantum contextuality scenarios.

\subsection{Synchronous non-local games and algebras}\label{ss:synchronous}

A synchronous nonlocal game is a tuple $(\lambda, I, O)$, where $I$ and $O$ are
finite sets, and $\lambda : I \times I \times O \times O \to \{0,1\}$ is a
function satisfying the \textbf{synchronous condition} that $\lambda(i,i,a,b) =
0$ for all $i \in I$ and $a \neq b \in O$. In the game, Alice and Bob receive
questions $i,j \in I$ respectively, and reply with answers $a,b \in O$. They
win if $\lambda(i,j,a,b) = 1$. The synchronous condition states that if Alice
and Bob receive the same question, they must reply with the same answer. 
Given a synchronous game $\mcS = (\lambda, I, O)$, the 
\textbf{synchronous algebra} $\msA(\mcS)$ is the $*$-algebra generated by 
$e^i_a$, $a \in O$, $i \in I$, subject to the relations
\begin{enumerate}
\item $(e_a^i)^*=(e_a^i)^2=e_a^i$ for all $a\in O$, $i\in I$,
\item $\sum_{a\in A}e_a^i=1$ for all $i\in I$, and
\item $e_a^i e_b^j=0$ whenever $\lambda(i,j,a,b)=0$.
\end{enumerate}
Since $\lambda(i,i,a,b)=0$ for all $a \neq b$ by the synchronous condition,
the relations (3) imply in particular that $e^i_a e^i_b = 0$ for $a \neq b$.

A result of Fritz states that the class of synchronous algebras is the same as
the class of BCS algebras \cite{Fri18} (the proof is stated for the
$C^*$-enveloping algebras of synchronous and BCS algebras, but also applies to
the $*$-algebras). In the remainder of this section, we explain how to go back
and forth between the two directly. If $B = (X,\{(U_i,V_i)\}_{i=1}^{\ell})$ is
a BCS with contexts, the BCS nonlocal game $\mcG(B)$ is not necessarily a
synchronous game as described, since the number of possible answers to question
$i$ is $2^{|U_i|}$, which can depend on $i$.  However, we can make any BCS game
into a synchronous game using the following lemma: 
\begin{lemma}\label{lem:padding}
    If $B$ is a BCS with contexts, then there is a BCS with contexts $B'$
    such that $\msA(B) \iso \msA(B')$, and the contexts in $B'$ all have the
    same number of variables.
\end{lemma}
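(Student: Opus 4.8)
The plan is to "pad out" each context by adding dummy variables so that all contexts end up with the same number of variables, while ensuring this does not change the algebra. Let $k = \max_i |U_i|$ be the largest context size. For each context $(U_i, V_i)$ with $|U_i| < k$, I want to enlarge $U_i$ to a set $U_i'$ of size exactly $k$ by adjoining $k - |U_i|$ fresh variables (fresh meaning not appearing in $X$ or in any other context), and I need to extend the constraints $V_i$ appropriately so that (a) the new variables are forced to commute with everything in $U_i'$ (automatic, since they only live in the single new context $U_i'$), and (b) the new variables are forced to take a fixed value, so that they contribute nothing to the algebra.

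The cleanest way to do (b) is to add, for each new variable $y$, a unary constraint pinning $y$ to a fixed value, say the constraint $(\,(y),\, \{1\}\,)$ asserting $y = \FALSE$ — equivalently, adjoin this as a constraint in $V_i'$ on the context $U_i'$. Concretely I would set $X' = X \sqcup \{\text{new variables}\}$, $U_i' = U_i \sqcup \{y_{i,1},\ldots,y_{i,k-|U_i|}\}$, and $V_i' = V_i \cup \{\,((y_{i,j}),\{1\}) : 1 \le j \le k-|U_i|\,\}$, leaving contexts already of size $k$ untouched, and $B' = (X', \{(U_i', V_i')\}_{i=1}^{\ell})$. Then every context of $B'$ has exactly $k$ variables.

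The main step is to check $\msA(B) \iso \msA(B')$. I would argue this directly from \Cref{defn:bcs_algebra}. In $\msA(B')$, the relation $((y_{i,j}),\{1\}) \in V_i'$ means that the assignment $\phi: y_{i,j} \mapsto -1$ is non-satisfying, so the projection $\tfrac12(1 + (-1)\cdot y_{i,j}) = \tfrac12(1 - y_{i,j})$ is a summand of $\Pi_{U_i',\phi}$ for every extension — more directly, by \Cref{lem:presperconstraint} applied to the constraint $((y_{i,j}),\{1\})$, we get the relation $\tfrac12(1 - y_{i,j}) = 0$, i.e.\ $y_{i,j} = 1$ in $\msA(B')$. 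Since $y_{i,j}^2 = 1$ already holds, $y_{i,j} = 1$ is consistent, and these variables are central (they appear only in context $U_i'$, where everything commutes). So the $*$-homomorphism $\msA_{con}(B) \to \msA(B')$ sending $X \hookrightarrow X'$ identity, composed with the quotient, factors through $\msA(B)$; conversely $X' \to \msA(B)$ sending $x \mapsto x$ for $x \in X$ and $y_{i,j} \mapsto 1$ respects all relations of $\msA(B')$ — relations (1) and (2) are immediate, and relation (3): for $C \in V_i$ the projection $\Pi_{C,\phi}$ in $\msA(B')$ maps to the corresponding $\Pi_{C,\phi}$ in $\msA(B)$ (since setting $y_{i,j}=1$ makes $\tfrac12(1+1\cdot y_{i,j}) = 1$, the padded projection over $U_i'$ collapses to the projection over $U_i$), while the new unary constraints map to $\tfrac12(1 - 1) = 0$. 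These two $*$-homomorphisms are mutually inverse, giving $\msA(B) \iso \msA(B')$.

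**The one subtlety** — and the only place requiring care — is the edge case where some context $U_i$ is already of maximal size $k$ but another context is smaller, and more importantly the case where we must be sure that the new variables $y_{i,j}$ are genuinely not constrained to commute with anything outside $U_i'$: since $y_{i,j}$ belongs to exactly one context, relation (2) only imposes commutation within $U_i'$, which is harmless. A second bookkeeping point: if $k = 0$ — i.e.\ all contexts are empty — then $B$ has no variables in any context and there is nothing to do; if $k \geq 1$ we pad as above, and the unary constraint language used for padding is always available. I expect no real obstacle here; the lemma is essentially a normalization trick and the proof is a routine verification that adjoining pinned central variables does not change the presented algebra, which I would write out in a few lines citing \Cref{lem:presperconstraint}.
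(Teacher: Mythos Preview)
Your proposal is correct and follows essentially the same approach as the paper: pad each context with dummy variables pinned to $1$ via unary constraints, then use \Cref{lem:presperconstraint} to conclude these variables equal $1$ in the algebra, yielding the isomorphism. The only cosmetic difference is that the paper reuses a single pool of padding variables $y_1,\ldots,y_{\beta-\alpha}$ across contexts rather than introducing fresh ones per context, but this has no effect on the argument.
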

\begin{proof}
    Suppose $B = (X,\{(U_i,V_i)\}_{i=1}^{\ell})$, and let $\alpha = \min_i |U_i|$
    and $\beta = \max_i |U_i|$. Let $X' := X \cup \{y_1,\ldots,y_{\beta-\alpha}\}$, where
    $y_1,\ldots,y_{\beta-\alpha}$ are new variables, and let $U_i' := U_i \cup
    \{y_1,\ldots,y_{r_i}\}$, where $r_i = \beta - |U_i|$. Finally, let $B' = (X',
    \{U_i',V_i'\}_{i=1}^{\ell})$, where $V_i' = V_i \cup \{(y_j,
    \{1\})\}_{j=1}^{r_i}$. In other words, $B'$ is the constraint system where
    we have added variables to each context so that each context has the same
    number of variables, and added constraints forcing those variables to be
    $1$. It follows from \Cref{lem:presperconstraint} that $y_i = 1$ in
    $\msA(B')$ for all $1 \leq i \leq \beta-\alpha$, and hence $\msA(B) \iso \msA(B')$. 
\end{proof}
If $B'$ is constructed from $B$ as in the proof of \Cref{lem:padding}, then the
game $\mcG(B')$ is also essentially the same as $\mcG(B)$ --- the only difference
is that the players have to pad their answers with $1$'s so that all answer
strings have the same length. This version of BCS nonlocal games was first
considered in \cite{KPS18} (in a slightly different format).

Now suppose $B = (X,\{(U_i,V_i)\}_{i=1}^{\ell})$ is a constraint system with $n$
variables and $\ell$ contexts, each with exactly $m$ variables. For every $1
\leq i \leq \ell$, pick an order on $U_i$ so that we can think of the elements
of $\Z_2^m$ as assignments to $U_i$. Then $\mcG(B)$ is a synchronous game with
$I = \{1,\ldots,\ell\}$ and $O = \Z_2^{m}$. The rule function $\lambda$ for
$\mcG(B)$ is defined by $\lambda(i,j,a,b) = 1$ if and only if $a$ and $b$ are
satisfying assignments to $(U_i, V_i)$ and $(U_j,V_j)$ respectively and
$a|_{U_i \cap U_j} = b|_{U_i \cap U_j}$. 
\begin{lemma}\label{lem:synchgame}
    Suppose $B$ is a constraint system with contexts, in which every context
    has the same number of variables, so $\mcG(B)$ is a synchronous game.
    Suppose further that every variable of $B$ appears in some context $U_i$.
    Then the synchronous algebra $\msA(\mcG(B))$ is isomorphic to the BCS
    algebra $\msA(B)$. 
\end{lemma}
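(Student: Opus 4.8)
The plan is to write down mutually inverse $*$-homomorphisms $\Phi\colon \msA(\mcG(B))\to\msA(B)$ and $\Psi\colon\msA(B)\to\msA(\mcG(B))$ directly on generators. For $\Phi$, fix for each context $U_i$ the chosen order identifying $O=\Z_2^m$ with the set of assignments to $U_i$, and set $\Phi(e^i_a)=\Pi_{U_i,a}$. Relations (1) and (2) of the synchronous algebra hold because, as recalled in the preliminaries, the $\Pi_{U_i,a}$ are self-adjoint projections with $\sum_a\Pi_{U_i,a}=1$ inside the copy of $\C\Z_2^{U_i}$ sitting in $\msA(B)$. For relation (3), suppose $\lambda(i,j,a,b)=0$. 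If $a$ is a non-satisfying assignment for $(U_i,V_i)$ then $\Pi_{U_i,a}=0$ in $\msA(B)$ by relation (3) of \Cref{defn:bcs_algebra}, and likewise if $b$ is non-satisfying for $(U_j,V_j)$; otherwise there is a variable $x\in U_i\cap U_j$ with $a(x)=-b(x)$. Since $x$ commutes with every variable of $U_i$, the factor $\tfrac{1}{2}(1+a(x)x)$ commutes with the remaining factors of $\Pi_{U_i,a}$ and can be moved to its right end; similarly $\tfrac{1}{2}(1+b(x)x)$ can be moved to the left end of $\Pi_{U_j,b}$. As $\tfrac{1}{2}(1+a(x)x)\cdot\tfrac{1}{2}(1-a(x)x)=\tfrac{1}{4}(1-x^2)=0$, we get $\Pi_{U_i,a}\Pi_{U_j,b}=0$, so $\Phi$ is well defined.

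For $\Psi$, I first define the image of a variable $x\in X$. Using the hypothesis that every variable lies in some context, choose $U_i$ with $x\in U_i$ and set $\psi(x)=\sum_{a}a(x)\,e^i_a$. The key preliminary claim is that this does not depend on the choice of $U_i$: if $x\in U_i\cap U_j$, then $e^i_ae^j_b=0$ unless $\lambda(i,j,a,b)=1$, and in the latter case $a|_{U_i\cap U_j}=b|_{U_i\cap U_j}$ forces $a(x)=b(x)$; hence, inserting $1=\sum_b e^j_b$,
\begin{equation*}
  \sum_a a(x)\,e^i_a=\sum_{a,b}a(x)\,e^i_ae^j_b=\sum_{a,b}b(x)\,e^i_ae^j_b=\sum_b b(x)\,e^j_b .
\end{equation*}
Granting this, $\psi$ extends to the free $*$-algebra, and one checks that it kills the defining relations of $\msA(B)$. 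Self-adjointness of $\psi(x)$ is clear, and $\psi(x)^2=\sum_a a(x)^2 e^i_a=1$ by orthogonality of the $e^i_a$ and $\sum_a e^i_a=1$. For $x,y\in U_k$, choosing the common context $k$ gives $\psi(x)\psi(y)=\sum_a a(x)a(y)\,e^k_a=\psi(y)\psi(x)$. Finally, for an assignment $\phi$ to $U_i$ the identity $\tfrac{1}{2}(1+\phi(x)\psi(x))=\sum_{a\colon a(x)=\phi(x)}e^i_a$, together with orthogonality of the $e^i_a$, yields
\begin{equation*}
  \psi(\Pi_{U_i,\phi})=\prod_{x\in U_i}\ \sum_{a\colon a(x)=\phi(x)}e^i_a=e^i_\phi ,
\end{equation*}
which is $0$ whenever $\phi$ is non-satisfying for $(U_i,V_i)$, since then $\lambda(i,i,\phi,\phi)=0$ and so $e^i_\phi=(e^i_\phi)^2=0$. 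Hence $\psi$ descends to $\Psi\colon\msA(B)\to\msA(\mcG(B))$.

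It then remains to see that $\Phi$ and $\Psi$ are mutually inverse. The computation $\psi(\Pi_{U_i,a})=e^i_a$ above says exactly that $\Psi\circ\Phi$ is the identity on each generator $e^i_a$. For $\Phi\circ\Psi$, pick a context $U_i\ni x$; then $\Phi(\psi(x))=\sum_a a(x)\,\Pi_{U_i,a}$, which by the polynomial--function correspondence recalled before \Cref{cor:relationpres}, applied to the coordinate function $f(a)=a(x)$, equals the generator $x$ itself in $\C\Z_2^{U_i}\subseteq\msA(B)$ (the one-line check $\sum_{v}v_j\prod_i\tfrac{1}{2}(1+v_iz_i)=z_j$). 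Thus $\Phi\circ\Psi$ is the identity on $X$, and $\Phi$, $\Psi$ are inverse $*$-isomorphisms. I expect the main obstacle to be the well-definedness of $\psi$ independently of the chosen context, together with the careful bookkeeping in relation (3) for $\Phi$; both rest on the in-context commutation relations and the orthogonality of the context projections, while everything else is a routine dictionary between the two presentations.
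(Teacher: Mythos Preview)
Your proof is correct and follows essentially the same approach as the paper: define $\Phi(e^i_a)=\Pi_{U_i,a}$ and $\Psi(x)=\sum_a a(x)\,e^i_a$, verify the relations, and check the two maps are mutually inverse. The only cosmetic difference is in the context-independence step: the paper shows $\phi(x,i)\phi(x,j)=1$ and concludes equality from unitarity, whereas you insert $1=\sum_b e^j_b$ and compare coefficients directly; both arguments rest on the same vanishing $e^i_a e^j_b=0$ when $a|_{U_i\cap U_j}\neq b|_{U_i\cap U_j}$.
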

\Cref{lem:synchgame} is proved in \cite{Gol21} for linear systems $B$. The proof
for general constraint systems is similar; we supply the proof for the convenience
of the reader.
\begin{proof}
    Let $B = (X,\{(U_i,V_i)\}_{i=1}^{\ell})$. By definition, if $a$ is not a
    satisfying assignment to $(U_i,V_i)$, then the projection $\Pi_{U_i,a}$ is
    equal to $0$ in $\msA(B)$. Suppose $a$ and $b$ are assignments to $U_i$ and
    $U_j$ respectively for some $1 \leq i, j \leq \ell$. If $a(x) \neq b(x)$ for
    some $x \in U_i \cap U_j$, then 
    \begin{equation*}
        \Pi_{U_i,a} \Pi_{U_j,b} = \cdots \tfrac{1}{2}(1+a(x)x) \cdot \tfrac{1}{2}(1+b(x)x) \cdots
         = 0
    \end{equation*}
    in $\msA(B)$. Hence there is a homomorphism $\psi : \msA(\mcG(B)) \to \msA(B)$ sending
    $e^i_a \mapsto \Pi_{U_i,a}$.

    Conversely, for every $1 \leq i \leq \ell$ and $x \in U_i$, let
    \begin{equation*}
        \phi(x,i) = \sum_{a} a(x) \cdot e^i_a,
    \end{equation*}
    in $\msA(\mcG(B))$, where the sum is across assignments $a$ to $U_i$.
    Note that $\phi(x,i)$ is unitary and $\phi(x,i)^2 = 1$. If $x \in U_i \cap U_j$,
    then 
    \begin{align*}
        \phi(x,i) \phi(x,j) & = \left(\sum_{a} a(x) e^i_a \right) \cdot \left(\sum_b b(x) e^j_b \right)
            = \sum_{a|_{U_i \cap U_j} = b|_{U_i \cap U_j}} a(x)b(x) e^i_a e^j_b \\
            &  = \sum_{a|_{U_i \cap U_j} = b|_{U_i \cap U_j}} e^i_a e^j_b 
            = \left(\sum_{a} e^i_a \right) \cdot \left(\sum_b e^j_b \right) = 1.
    \end{align*}
    Since $\phi(x,i)$ and $\phi(x,j)$ are unitaries, we conclude that
    $\phi(x,i) = \phi(x,j)$, and we denote this element by $\phi(x)$. If $x, y \in U_i$
    for some $i$, then $\phi(x) = \phi(x,i)$ and $\phi(y) = \phi(y,i)$ commute.
    Since every variable $x \in X$ belongs to some context, there is a
    homomorphism $\phi : \msA_{con}(B) \to \msA(\mcG(B))$ sending $x \mapsto \phi(x)$. 
    Observe that
    \begin{equation*}
        \phi(\Pi_{U_i,a}) = \prod_{x \in U_i} \tfrac{1}{2} (1 + a(x) \phi(x))
                = \prod_{x \in U_i} \sum_{b : b(x)=a(x)} e^i_b = e^i_a.
    \end{equation*}
    If $a$ is not a satisfying assignment to $(U_i,V_i)$, then $\lambda(i,i,a,a)=0$
    and hence $e^i_a = e^i_a \cdot e^i_a = 0$, so the homomorphism $\phi$ descends
    to a homomorphism $\wtd{\phi} : \msA(B) \to \msA(\mcG(B))$. It is not hard
    to see that $\psi$ and $\wtd{\phi}$ are inverses. 
\end{proof}
\Cref{lem:synchgame} shows that, if every variable of $B$ is contained in some context,
then $\msA(B)$ is generated by the projections $e^i_a = \Pi_{U_i,a}$, subject to 
relations (1)-(3) for the synchronous algebra. 

If $x$ is a variable of $B = (X,\{(U_i,V_i)\}_{i=1}^{\ell})$ which does not
appear in any context, then $\msA(B) = \msA(B') \ast \C \Z_2$, where $B' = (X
\setminus \{x\},\{(U_i,V_i)\}_{i=1}^{\ell})$ is the constraint system with $x$
removed. We leave it as an exercise to the reader to show that such algebras
can also be expressed as synchronous algebras. 

Going the opposite direction, a convenient way to turn a synchronous algebra
into a BCS algebra is described in \cite{Pad22}: Suppose $\msS =
(I,O,\lambda)$ is a synchronous game. Let $\oneink \subseteq \Z_2^k$ be the
relation consisting of all vectors $(x_1,\ldots,x_k)$ such that exactly one of
$x_1,\ldots,x_k$ is equal to $-1$, and all others are equal to $1$. Let $B =
(X, V_1 \cup V_2)$, where
\begin{align*}
        X & = \{x_{ia} : i \in I, a \in O\},  \\
        V_1 & = \left\{x_{ia} \wedge x_{jb} = \FALSE\ :\ i,j \in I, a, b \in O \text{ such that } \lambda(i,j,a,b)=0\right\},  \\
        V_2 & = \left\{\left((x_{ia})_{a \in O}, \oneink\right) : i \in I \right\},
\end{align*} 
and $k= |O|$. Let $C_{ijab}$ be the constraint $x_{ia} \wedge x_{jb} = \FALSE$. The
point of the constraints in $V_1$ is that the only non-satisfying assignment
to $C_{ijab}$ is $\phi(x_{ia}) = \phi(x_{jb}) = -1$, and for this assignment
$\Pi_{C_{ijab},\phi} = \frac{(1-x_{ia})}{2} \frac{(1-x_{jb})}{2}$. By \Cref{lem:presperconstraint},
\begin{equation*}
    \frac{(1-x_{ia})}{2} \frac{(1-x_{jb})}{2} = 0
\end{equation*}
in $\msA(B)$ for all $i,j \in I$ and $a,b \in O$ with $\lambda(i,j,a,b)=0$. 
In particular, by the synchronous condition this relation holds when $i=j \in I$
and $a \neq b \in O$. When combined with the relations from $V_1$, the constraints
in $V_2$ imply that
\begin{equation*}
    \sum_{a} \frac{(1-x_{ia})}{2} = 1 
\end{equation*}
for all $i \in I$. Hence there is a homomorphism $\msA(\mcS) \to \msA(B)$ sending
$e^i_a \mapsto \frac{1-x_{ia}}{2}$, and this map is an isomorphism \cite{Pad22}.

\begin{example}
    Suppose $G$ and $H$ are graphs. The graph homomorphism game is the
    synchronous game $\mcS$ with $I = V(G)$, $O = V(H)$, and
    $\lambda(u,v,a,b)=0$ if and only if $u=v$ and $a \neq b$, or $u$ is
    adjacent to $v$ and $a$ is not adjacent to $b$. Let $B(G,H)$ be the
    constraint system with variables $X = \{x_{ua} : u \in V(G), a \in V(H)\}$
    and two types of constraints: $x_{ua} \wedge x_{vb} = \FALSE$ if $u=v \in
    V(G)$ and $a \neq b$, or $u$ and $v$ are adjacent in $G$ but $a$ and $b$
    are not adjacent in $V(H)$; and $((x_{ua})_{a \in V(H)} : \oneink)$ for $u
    \in V(G)$, where $k = |V(H)|$. Then $\msA(B(G,H))$ is the synchronous algebra of
    $\mcS$. 

    The constraint system $B(G,H)$ is satisfiable if and only if there is a
    homomorphism from $G$ to $H$. The chromatic number of $G$ is the smallest
    $n$ such that there is a homomorphism $G \to K_n$. In \cite{Cam07}, the
    \textbf{quantum chromatic number} of $G$ is defined to be the smallest $n$
    such that $B(G,H)$ has a matrix satisfying assignment.
    Quantum-approximable, commuting-operator, $C^*$, and algebraic chromatic
    numbers are defined similarly using $\mcR^{\mcU}$-satisfiability,
    tracial-satisfiability, $C^*$-satisfiability, and algebraic satisfiability
    respectively \cite{PSSTW16}, and these definitions (and follow-up work
    on synchronous algebras in \cite{HMPS19}) are part of the inspiration for
    Definition \ref{def:sat_types}.
\end{example}

\section{Separations between types of satisfiability}\label{sec:separations}

It is natural to ask whether any of the different types of satisfiability in
\Cref{def:sat_types} are equal. The Mermin-Peres magic square is an example of a
BCS which is matrix-satisfiable, but not satisfiable. Hence the corresponding
nonlocal game has a perfect quantum strategy, but no perfect classical strategy.
We can also separate the other types of satisfiability:
\begin{example}[\cite{Slof17}]\label{ex:connesembeddable}
    There is a linear system $B$ over $\Z_2$ which is $\mcR^\mcU$-satisfiable,
    but not matrix-satisfiable. The corresponding nonlocal game has a perfect
    quantum-approximable strategy, but no perfect quantum strategy.
\end{example}

\begin{example}[\cite{JNVWY22}]\label{ex:mip}
    The MIP*=RE theorem implies the existence of a synchronous algebra which
    has a tracial state, but does not have any homomorphisms to $\mcR^{\mcU}$.
    Hence there is a BCS which is tracially-satisfiable but not
    $\mcR^{\mcU}$-satisfiable. The corresponding nonlocal game has a perfect
    commuting-operator strategy, but no perfect quantum-approximable strategy.
\end{example}

\begin{example}[\cite{HMPS19}]\label{ex:notcstar}
    For any $G$ and $H$, $B(G,K_4)$ is algebraically satisfiable. However, there is
    a graph $G$ with $C^*$-chromatic number greater than $4$, meaning that $B(G,K_4)$
    is not $C^*$-satisfiable. We do not know of an operational interpretation of 
    this separation.
\end{example}

We can also give an example of a BCS which is $C^*$-satisfiable but not
tracially-satisfiable. For this, we need the following well-known lemma:
\begin{lemma}\label{lem:ms_relations}
    Let $B$ be the Mermin-Peres magic square in \Cref{ex:mp_square}. 
\begin{enumerate}[(a)]
\item In $\msA(B)$ we have $x_1 x_5 = - x_5 x_1$ and $x_2 x_4 = - x_4 x_2$, while
        $x_1$ and $x_5$ commute with $x_2$ and $x_4$.
\item If $X_i,Z_i$ are unitaries on a Hilbert space $H_i$, $i=1,2$ such that
        $X_i^2 = Z_i^2 = \Id$ and $X_i Z_i = - Z_i X_i$ for $i=1,2$, then there is a unique
        homomorphism
        \begin{equation*}
            \phi : \msA(B) \to \mcB(H_1 \otimes H_2)
        \end{equation*}
        such that $\phi(x_1) = Z_1 \otimes \Id$, $\phi(x_2) = \Id \otimes Z_2$,
        $\phi(x_4) = \Id \otimes X_2$, and $\phi(x_5) = X_1 \otimes \Id$. 
   % \item If $\phi : \msA(B) \to \mcB(H)$ is a homomorphism, then the map
   %     \begin{equation*}
   %         \phi \otimes \Id_{H'} : \msA(B) \to \mcB(H \otimes H') : x \mapsto
   %             \phi(x) \otimes \Id_{H'}
   %     \end{equation*}
   %     is a homomorphism for any Hilbert space $H'$. 
\end{enumerate}
\end{lemma}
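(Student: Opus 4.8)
The plan is to work directly with the presentation of $\msA(B)$ given by \Cref{cor:relationpres}, translating the six magic-square constraints $x_1x_2x_3=1$, etc., into the operator relations $P_{R}(S)=-1$, which here just say that the product of the three variables in each row is $1$ and in each column is $-1$. So $\msA(B)=\C^*\langle x_1,\ldots,x_9 : x_i^*=x_i,\ x_i^2=1,\ [\text{commutation within each row/column}],\ x_1x_2x_3=1,\ldots,x_3x_6x_9=-1\rangle$. For part (a), I would first observe that $x_1$ and $x_2$ lie in a common context (the first row) hence commute, and likewise $x_1,x_4$ (first column), $x_2,x_5$, $x_4,x_5$; so $x_1$ commutes with $x_2,x_4$ and $x_5$ commutes with $x_2,x_4$, which is the ``easy half'' of (a). For the anticommutation $x_1 x_5 = -x_5 x_1$: using the constraints, write $x_3 = x_1 x_2$ (from row 1, since $x_i^2=1$), $x_6 = x_4 x_5$ (from row 2), and $x_3 x_6 x_9 = -1$ (column 3), so $x_9 = -x_6^{-1}x_3^{-1} = -x_6 x_3 = -x_4 x_5 x_1 x_2$. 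On the other hand, from columns 1 and 2, $x_7 = -x_1 x_4$ and $x_8 = -x_2 x_5$ (the $-$ because column products are $-1$), and $x_7 x_8 x_9 = 1$ (row 3). Substituting gives $1 = (-x_1 x_4)(-x_2 x_5)(-x_4 x_5 x_1 x_2) = -x_1 x_4 x_2 x_5 x_4 x_5 x_1 x_2$. Now push the known commutations through: $x_4$ commutes with $x_2$ (column 2) and with $x_5$ (row 2), and $x_1$ commutes with $x_2$ (row 1); after moving the two $x_4$'s together they cancel ($x_4^2=1$), and similarly collect the two $x_5$'s using $x_5x_1$ vs $x_1x_5$. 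Carefully bookkeeping which pairs are known to commute, the identity collapses to a relation forcing $x_1 x_5 x_1 x_5 = -1$, i.e. $(x_1x_5)^2 = -1$, which combined with $x_1^2=x_5^2=1$ gives $x_1 x_5 = -x_5 x_1$; the argument for $x_2 x_4 = -x_4 x_2$ is symmetric (swap the roles of rows and columns, or of the indices). I expect this bookkeeping — tracking exactly which of the eighteen generator pairs commute and chasing the cancellations in the correct order — to be the one genuinely fiddly step, though it is elementary.

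For part (b), given the data $X_i, Z_i$ on $H_i$, the plan is to exhibit an explicit $*$-representation $\psi$ of $\msA_{con}(B)$ on $H_1\otimes H_2$ by the standard magic-square assignment: set $\psi(x_1)=Z_1\otimes\Id$, $\psi(x_2)=\Id\otimes Z_2$, $\psi(x_4)=\Id\otimes X_2$, $\psi(x_5)=X_1\otimes\Id$, and then define the remaining five operators by the formulas forced above, namely $\psi(x_3)=Z_1\otimes Z_2$, $\psi(x_6)=X_1\otimes X_2$, $\psi(x_7)=-Z_1X_1\otimes\Id$, $\psi(x_8)=-\Id\otimes Z_2X_2$, and $\psi(x_9)=-Z_1X_1\otimes Z_2X_2$ (the signs chosen so the column products come out $-1$). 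One checks: each $\psi(x_i)$ is self-adjoint (since $Z_iX_i$ self-adjoint follows from $X_iZ_i=-Z_iX_i$ together with $(Z_iX_i)^*=X_iZ_i=-Z_iX_i$... actually $(Z_iX_i)^* = X_i^* Z_i^* = X_i Z_i = -Z_i X_i$, so $Z_iX_i$ is \emph{anti}-self-adjoint; hence I instead take $\psi(x_7)=-iZ_1X_1\otimes\Id$ or, more cleanly, recall the classical fix of using $\psi(x_7)=X_1Z_1\otimes\Id$ so that the needed sign works out — I would verify the correct Hermitian normalization of the third row/column entries here) and squares to $\Id$; operators in a common row or column commute (this is the content of the magic square, using $X_i^2=Z_i^2=\Id$ and $X_iZ_i=-Z_iX_i$); and all three row products equal $\Id$ while all three column products equal $-\Id$. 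Thus $\psi$ factors through $\msA(B)$ by \Cref{cor:relationpres}, giving the homomorphism $\phi$. Uniqueness is immediate: the computation in part (a) (solving the constraints) shows $x_3,x_6,x_7,x_8,x_9$ are all words in $x_1,x_2,x_4,x_5$ inside $\msA(B)$, so any homomorphism out of $\msA(B)$ is determined by its values on $x_1,x_2,x_4,x_5$. The main thing to get right in (b) is the Hermiticity normalization of the third-row and third-column generators — fixing the correct scalar/ordering so that the assigned operators are genuine $\pm1$-eigenvalue self-adjoint unitaries while still satisfying the row/column product constraints — which is exactly the classical subtlety in the Mermin-Peres construction and the step I would treat most carefully.
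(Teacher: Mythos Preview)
The paper does not actually prove this lemma---it is stated as ``well-known'' and left without proof---so there is nothing to compare against. Your overall strategy (solve for $x_3,x_6,x_7,x_8,x_9$ in terms of $x_1,x_2,x_4,x_5$, read off the anticommutation from the remaining row-3 relation, and for (b) verify the standard tensor assignment) is exactly the right one. However, there are two concrete slips that would derail the write-up as it stands.

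In part (a), you assert that ``$x_4$ commutes with $x_2$ (column 2)''. This is false: $x_4$ lies in row 2 and column 1, while $x_2$ lies in row 1 and column 2, so they share no context---and indeed $x_2x_4=-x_4x_2$ is half of what you are trying to prove. If you actually used $[x_2,x_4]=0$ in the simplification of $-x_1x_4x_2x_5x_4x_5x_1x_2=1$, the two $x_4$'s and the two $x_5$'s would all cancel and you would be left with $-x_1x_2x_1x_2=1$, i.e.\ $-1=1$, collapsing the algebra. The commutations you may legitimately use are $[x_1,x_2]=[x_1,x_4]=0$ and $[x_5,x_2]=[x_5,x_4]=0$; with only these, the row-3 relation (after substituting $x_7=-x_1x_4$, $x_8=-x_2x_5$, $x_9=-x_3x_6=-x_1x_2x_4x_5$) rearranges to $-x_4x_2(x_1x_5)^2x_2x_4=1$, hence $(x_1x_5)^2=-1$; writing $x_9=-x_6x_3$ instead yields $(x_4x_2)^2=-1$ by the symmetric manipulation.

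In part (b), your Hermiticity worry is a phantom caused by a bookkeeping error: since $\phi(x_4)=\Id\otimes X_2$ (not $X_1\otimes\Id$), one has $\phi(x_7)=-\phi(x_1)\phi(x_4)=-Z_1\otimes X_2$, which is manifestly self-adjoint with square $\Id$. Likewise $\phi(x_8)=-X_1\otimes Z_2$ and $\phi(x_9)=-Z_1X_1\otimes Z_2X_2$; the last is self-adjoint because each tensor factor $Z_iX_i$ is skew-adjoint, so their tensor product is self-adjoint, and it squares to $\Id$ since $(Z_iX_i)^2=-\Id$. No ``$i$'' or reordering fix is needed. With the correct formulas, checking the six row/column products and the within-context commutations is routine, and uniqueness follows exactly as you say.
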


\begin{prop}\label{prop:nottracial}
    There is a constraint system $B$ such that $\msA(B)$ is $C^*$-satisfiable,
    but not tracially-satisfiable.
\end{prop}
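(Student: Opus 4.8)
The plan is to produce a constraint system $B$ for which $\msA(B)$ is nonzero but admits no tracial state. $C^*$-satisfiability will then follow as soon as $\msA(B)$ surjects onto a nonzero $C^*$-algebra (compose with a faithful representation of the quotient), and the absence of a tracial state is, by definition, the failure of tracial satisfiability. To rule out tracial states I would force the unit of $\msA(B)$ to be \emph{properly infinite}: I want elements $s_1, s_2 \in \msA(B)$ with $s_1^* s_1 = s_2^* s_2 = 1$ and $s_1^* s_2 = 0$. Then $s_1 s_1^*$ and $s_2 s_2^*$ are orthogonal projections, each Murray--von Neumann equivalent to $1$, with sum $\le 1$; a tracial state $\tau$ would give $\tau(s_i s_i^*) = \tau(s_i^* s_i) = \tau(1)$ for $i = 1,2$ while $\tau(s_1 s_1^* + s_2 s_2^*) \le \tau(1)$, forcing $2\tau(1) \le \tau(1)$, which is impossible. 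The concrete target is therefore to realize, as a BCS algebra, an algebra that surjects onto the matrix amplification $M_2(\mcO_2)$ of the Cuntz algebra (so that it is nonzero) and whose unit is properly infinite (so that it has no trace).

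Two facts bring this within reach of the BCS formalism. First, by \Cref{lem:ms_relations} a BCS algebra can contain a pair of anticommuting self-adjoint unitaries: one includes a copy of the Mermin--Peres magic square and uses the variables $x_1, x_5$ (relabelled) as the pair, the remaining seven variables being harmless passengers. Writing a projection as $e = \tfrac{1}{2}(1+y)$ for a self-adjoint unitary $y$, anticommutation of $y$ with another self-adjoint unitary $u$ yields the ``swap'' relation $u e u = 1 - e$, and with further gadgets one can produce more general conjugation relations $u e u = e'$ between projections lying in different contexts. Second, the disjoint union of two constraint systems with contexts has BCS algebra the unital free product of the two pieces, so BCS algebras are closed under free products. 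The route to $\mcO_2$ is the standard dilation of an isometry: inside $M_2(\mcO_2)$, for a generating isometry $s$ of $\mcO_2$ the element $E_{21}\otimes s$ is a partial isometry whose source $E_{11}\otimes 1$ and range $E_{22}\otimes ss^*$ are \emph{orthogonal} projections with sum $\le 1$, and any partial isometry $v$ with orthogonal source $p$ and range $q$ and $p+q \le 1$ is recovered as $q\,u\,p$ from the single self-adjoint unitary $u := v + v^* + (1-p-q)$ together with $p$ and $q$. In this way $M_2(\mcO_2)$ acquires a presentation by finitely many self-adjoint unitaries subject to a handful of relations, each of which is either an anticommutation relation or a relation among the projections $\tfrac{1}{2}(1 \pm y_j)$ (the dilated partial isometries having the prescribed orthogonal source/range projections, and the two Cuntz range projections summing to $E_{22}$).

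The remaining task is to turn this presentation into a BCS with contexts: the anticommutation relations come from magic-square gadgets, and the relations among the projections --- after introducing the conjugated projections as fresh variables --- become boolean/linear relations on suitable contexts, imposed via \Cref{cor:relationpres} and the device of \Cref{ex:commute2}; the fact \cite{Fri18} that BCS algebras coincide with finite colimits of finite-dimensional abelian $C^*$-algebras is a useful sanity check that no obstruction blocks this. With $B$ constructed, one verifies (i) $\msA(B) \neq 0$ by mapping the generators to the operators above and checking that the finitely many defining relations hold in $M_2(\mcO_2)$, exhibiting $M_2(\mcO_2)$ as a quotient; and (ii) the unit of $\msA(B)$ is properly infinite, by writing the witnesses $s_1, s_2$ as words in the generators and checking that the relations force $s_i^* s_i = 1$ and $s_1^* s_2 = 0$. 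I expect the main obstacle to be exactly this translation --- in particular, faithfully encoding the genuinely infinite-dimensional Cuntz relation $s_i^* s_i = 1$ as a BCS constraint, and the conjugation relations $u e u = e'$ with $e' \neq 1-e$ that the dilation forces --- together with the check in (ii). Presentations of this kind tend to pick up ``junk'' (unconstrained unitary twists in the dilating generators), which could a priori support a tracial state; one must verify that the properly-infinite witnesses $s_1, s_2$ survive in $\msA(B)$ regardless of the junk, which they do because proper infiniteness is visible already from the relations among the projections.
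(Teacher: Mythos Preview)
Your proposal is a program rather than a proof, and the program has a gap at its most delicate point. The unexecuted step is the encoding of the conjugation relation $u e u = e'$ (equivalently $u y u = y'$ for self-adjoint unitaries) as BCS constraints when $e' \ne 1-e$. BCS relations live inside commutative contexts: a constraint can only relate variables that have been declared to commute. Anticommutation $u y u = -y$ is available via the magic-square gadget precisely because the surrounding commutative structure of the square forces it, but a genuinely three-variable relation $u y u = y'$ with $u$ commuting with neither $y$ nor $y'$ has no encoding on offer, and you do not supply one. The appeal to Fritz's theorem is not a substitute: it says the \emph{class} of BCS algebras is large, not that any particular relation you write down is enforced at the level of the universal algebra rather than merely in some quotient. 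This matters for your step (ii): even if you produced a BCS $B$ with a surjection $\msA(B)\twoheadrightarrow M_2(\mcO_2)$, you would still need the identities $s_i^* s_i = 1$ to hold already in $\msA(B)$, and your final sentence asserts this without argument. The ``junk'' you mention could spoil exactly that identity while leaving the quotient map intact, in which case $\msA(B)$ might well carry a trace.

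The paper's proof is dramatically shorter and sidesteps all of this. It takes two disjoint copies $B_1,B_2$ of the magic square, in variables $x_{1j}$ and $x_{2j}$, and adds the single constraint $x_{21} = x_{11}\wedge x_{12}$. For $C^*$-satisfiability it writes down explicit operators on a tensor power of $\ell^2\mathbb N$, using only that $H_0 \cong H_0^{\oplus 3}$ for countable-dimensional $H_0$ to make the operator $\tfrac12(1+Z\otimes 1 + 1\otimes Z - Z\otimes Z)$ (the image of $x_{21}$) anticommute with a self-adjoint unitary, so that \Cref{lem:ms_relations} furnishes a representation of the second square as well. The no-trace argument is a two-line computation, not proper infiniteness: in any tracial state $\tau$, each of $x_{11},x_{12},x_{11}x_{12}$ anticommutes with some unitary in the first square, so their traces vanish and $\tau(x_{21}) = \tfrac12\tau(1+x_{11}+x_{12}-x_{11}x_{12}) = \tfrac12$; but $x_{21}$ anticommutes with $x_{25}$ in the second square, forcing $\tau(x_{21})=0$. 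No Cuntz relations, dilations, or encodings of non-commutative conjugation are needed.
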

\begin{proof}
    Let $B_i = (X_i,V_i)$, $i=1,2$ be two copies of the Mermin-Peres magic
    square, in variables $X_i = \{x_{ij} : 1 \leq j \leq 9\}$ respectively.
    Consider the constraint system 
    \begin{equation*}
        B = (X_1 \cup X_2, V_1 \cup V_2 \cup \{x_{21} = x_{11} \wedge x_{12}\}).
    \end{equation*} 
    By \Cref{cor:relationpres} and \Cref{ex:zAND}, 
    \begin{equation*}
        \msA(B) = \msA(B_1) \ast \msA(B_2) / \langle x_{21} = \tfrac{1}{2} (1 + x_{11} + x_{12} - x_{11} x_{12}) \rangle.
    \end{equation*}
    Hence to give a representation $\phi : \msA(B) \to \mcB(H)$, it suffices to 
    give a pair of representations $\phi_i : \msA(B_i) \to \mcB(H)$, $i=1,2$
    such that
    \begin{equation*}
        \phi_2(x_{21}) = \tfrac{1}{2} \left(\Id + \phi_1(x_{11}) + \phi_1(x_{12})
         - \phi_1(x_{11} x_{12}) \right). 
    \end{equation*}
    Let $H_0 = \ell^2 \mbN$ be the countable dimensional Hilbert space, and let
    $X$ and $Z$ be the operators with block-matrix decomposition
    \begin{equation*}
        Z = \begin{pmatrix} \Id & 0 \\ 0 & -\Id \end{pmatrix}
        \text{ and } 
        X = \begin{pmatrix} 0 & \Id \\ \Id & 0 \end{pmatrix}
    \end{equation*}
    on $H_1 = H_0 \oplus H_0$. Let 
    \begin{equation*}
        W = \tfrac{1}{2} (\Id + Z \otimes \Id + \Id \otimes Z - Z \otimes Z)
            = \begin{pmatrix} 1 & 0 & 0 & 0 \\ 0 & 1 & 0 & 0 \\ 0 & 0 & 1 & 0 \\ 0 & 0 & 0 & -1 \end{pmatrix}
    \end{equation*}
    on $H_2 = H_1 \otimes H_1 = (H_0 \otimes H_0)^{\oplus 4}$. Now because $H_0$ is 
    countable dimensional, $H_0^{\oplus 3} \iso H_0$. Pick an isomorphism $T_0
    : H_0 \to H_0^{\oplus 3}$, and let $T$ be the operator with block-matrix decomposition
    \begin{equation*}
        T = \begin{pmatrix} 0 & T_0 \\ T_0^{-1} & 0 \end{pmatrix}
    \end{equation*}
    on $H_2 = (H_0 \otimes H_0)^{\oplus 3} \oplus H_0$. Observe that $W^2 = T^2 = 1$, and
    $W$ and $T$ anticommute. 

    Let $H = H_2 \otimes H_2$. Setting $Z_1 = Z$, $X_1 = X$, 
    $Z_2 = Z \otimes \Id_{H_2}$, and $X_2 = X \otimes \Id_{H_2}$ in
    \Cref{lem:ms_relations}, we see that there is a homomorphism 
    \begin{equation*}
        \phi_1 : \msA(B) \to \mcB(H)
    \end{equation*}
    with $\phi_1(x_{11}) = Z \otimes \Id_{H_1 \otimes H_2}$, $\phi_1(x_{12})
    = \Id_{H_1} \otimes Z \otimes \Id_{H_2}$, $\phi_1(x_{14}) = \Id_{H_1} \otimes
    X \otimes \Id_{H_2}$, and $\phi_1(x_{15}) = X \otimes \Id_{H_1 \otimes H_2}$. 
    Similarly, setting $Z_1 = Z_2 = W$ and $X_1 = X_2 = T$ in
    \Cref{lem:ms_relations}, we see that there is a homomorphism
    $\phi_2 : \msA(B) \to \mcB(H)$ with $\phi_2(x_{21}) = W \otimes \Id_{H_2}$, 
    $\phi_2(x_{22}) = \Id_{H_2} \otimes W$, $\phi_2(x_{24}) = \Id_{H_2} \otimes T$,
    and $\phi_2(x_{25}) = T \otimes \Id_{H_2}$. From this we see that
    \begin{align*}
         \tfrac{1}{2} (\Id + & \phi_1(x_{11}) + \phi_1(x_{12})
         - \phi_1(x_{11} x_{12}) ) \\
        & = \tfrac{1}{2} \left(\Id_{H_2} + Z \otimes \Id_{H_1} 
            + \Id_{H_1} \otimes Z - Z \otimes Z \right) \otimes \Id_{H_2} 
        = W \otimes \Id_{H_2} = \phi_2(x_{21}),
    \end{align*}
    and we conclude that $B$ is $C^*$-satisfiable.
        
    To see that $B$ is not tracially-satisfiable, suppose that $\tau$ is a tracial
    state on $\msA(B)$. If $a$ and $b$ are unitaries in $\msA(B)$ such that
    $a^2 = b^2 = 1$ and $ab = - ba$, then we have that $\tau(a) = \tau(bab) = -\tau(a)$,
    and hence $\tau(a)=0$. By \Cref{lem:ms_relations} again, $x_{11}$, $x_{12}$ and
    $x_{11} x_{12}$ anti-commute with $x_{15}$, $x_{14}$, and $x_{15}$ respectively, so
    $\tau(x_{11}) = \tau(x_{12}) = \tau(x_{11} x_{12}) = 0$. Hence
    \begin{equation*}
        \tau(x_{21}) = \tfrac{1}{2} \tau\left( 1 + x_{11} + x_{12} - x_{11} x_{12}\right)
            = \tfrac{1}{2}.
    \end{equation*}
    But $x_{21}$ anti-commutes with $x_{25}$, so $\tau(x_{21}) = 0$, a contradiction.
\end{proof}
Combined with \Cref{lem:synchgame}, \Cref{prop:nottracial} shows that there is
a synchronous algebra with a state, but no tracial state, answering a question
in \cite{Har21}. And while \Cref{prop:nottracial} is not directly relevant to
nonlocal games, it does imply that there are contextual operators which do not
admit a tracial state.

\section{Constraint languages and definability}\label{sec:define}

From the previous section, we know that all of the types of satisfiability in
\Cref{def:sat_types} are different. Given a constraint language $\mcL$, it
makes sense to ask: are any of the types of satisfiability in
\Cref{def:sat_types} equal for constraint systems over $\mcL$?  This question
was first asked by Atserias, Kolaitis, and Severini in \cite{AKS17}. In this section, we give
an overview of their results, and show how some proofs can be streamlined and
extended using BCS algebras. We then discuss how the problem of finding a
non-hyperlinear group fits into this picture.

Before proceeding, we first review some well-known constraint languages.
 Recall that, in this setting, a \textbf{term} is a boolean variable or a
constant from $\Z_2$.  A \textbf{literal} is a term, or the negation of a term.
A literal is said to be \textbf{negative} if it is the negation of a term, and
\textbf{positive} otherwise. A \textbf{clause} is a relation constructed by taking a 
disjunction of literals. A \textbf{conjunctive normal form (CNF)} formula is 
a conjunction (AND) of clauses, or equivalently a constraint system where are
all the relations are clauses. With this terminology:
\begin{enumerate}
\item A relation is \textbf{bijunctive} if it is the set of satisfying assignments
    of a CNF with at most two literals in every clause. The set of bijunctive
    relations is denoted by $2$SAT.
\item A relation is \textbf{Horn} if it is the set of satisfying assignments
    of a CNF in which all clauses have at most one positive term. The set of
    Horn relations is denoted by HORN.
\item A relation is \textbf{dual Horn} if it is the set of satisfying assignments
    of a CNF in which all clauses having at most one negative term. The set of
    dual Horn relations is denoted by DUAL-HORN.
\item A relation is \textbf{linear} (sometimes called \textbf{affine}) if it is
    the set of satisfying assignments of a linear system of equations. The set of
    linear relations is denoted by LIN.
%\item A $k$-ary relation is \textbf{$0$-valid} (resp.~\textbf{$1$-valid}) if it contains
%    $(0,\ldots,0)$ (resp.~$(1,\ldots,1)$). The set of $0$-valid (resp.
%    $1$-valid) relations is denoted by $0$-VALID (resp. $1$-VALID). 
\end{enumerate}
We say that a constraint language is bijunctive (resp.~Horn, dual Horn, or linear)
if every relation in the language is bijunctive (resp.~Horn, dual Horn, or linear),
or equivalently if it is a subset of $2$SAT (resp.~HORN, DUAL-HORN, or LIN). 

Note that we defined a $3$SAT instance to be a constraint system where every
relation is a clause with at most three literals, and we could similarly define
a 2SAT instance to be a constraint system where every relation is a clause with
at most two literals. This means that a constraint system over $2$SAT is somewhat
different from a $2$SAT instance, since a constraint system over $2$SAT can use
any bijunctive relation, not just clauses with at most two literals. 
However, if $B = (X, \{(U_i,V_i)\}_{i=1}^{\ell}$ and $B' = (X,
\{(U_i,V_i')\}_{i=1}^{\ell}$ are two constraint systems with the same contexts
and such that $V_i$ and $V_i'$ have the same satisfying assignments, then
$\msA(B) = \msA(B')$. Thus if we want to make a statement about $\msA(B)$ for
constraint systems $B$ over $2$SAT, we can assume that $B$ is a $2$SAT instance.
The same goes for constraint systems over HORN, DUAL-HORN, and so on, as well 
as constraint systems with contexts. 

Given a constraint language $\mcL$, let $\SAT(\mcL)$ denote the decision
problem of determining whether a constraint system $B$ over $\mcL$ has a
satisfying assignment. The classes of constraint languages defined in (1)-(4) are
sometimes called Schaefer classes (specifically, Schaefer classes with
constants), due to their appearance in Schaefer's dichotomy theorem \cite[Lemma
4.1]{Sch78}, which states that if a finite constraint language $\mcL$ is
bijunctive, Horn, dual Horn, or linear, then $\SAT(\mcL)$
is decidable in polynomial time, and otherwise $\SAT(\mcL)$ is NP complete.
Thus we normally think of the Schaefer classes as the tractable classes.  The
first observation of \cite{AKS17}, building on earlier work of \cite{Ji13}, is
that for every Schaefer class except LIN, all the types of satisfiability in
\Cref{def:sat_types} are the same:
\begin{lemma}[\cite{Ji13, AKS17}]\label{lem:noseps}
    If $B$ is a BCS with contexts over 2SAT, HORN, or DUAL-HORN, then $B$ is
    algebraically-satisfiable only if $B$ is satisfiable.
\end{lemma}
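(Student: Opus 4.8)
The plan is to reduce to two essential cases, $2$SAT and HORN, and in each to replay the classical polynomial-time satisfiability algorithm (the implication-graph method for $2$SAT, forward chaining for HORN) as a chain of identities in $\msA(B)$, arranged so that if the algorithm reports ``unsatisfiable'' then the very same identities force $1=0$ in $\msA(B)$; contrapositively, $\msA(B)\neq 0$ will yield a satisfying assignment. The one delicate point, which dictates the whole organisation, is that variables lying in different contexts need not commute in $\msA(B)$, so one cannot simply run ordinary resolution and attach to each derived clause a relation in $\msA(B)$: a resolvent of resolvents generally involves a non-commuting product of variables pulled from several contexts, for which no such relation even makes sense. The arguments below are designed to avoid this by only ever multiplying together elements that depend on a single variable, or elements coming from a single clause (whose variables do commute, as they all lie in one context). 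Once that discipline is in place the algebra is routine, so the real content of the proof is precisely this bookkeeping; I expect it to be the main obstacle.

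For the reductions: using the observation recorded just before the statement that $\msA(B)$ depends only on the contexts of $B$ and on which assignments satisfy each $V_i$, we may replace each bijunctive (resp.\ Horn) relation of $B$ by an equivalent conjunction of clauses of width at most two (resp.\ Horn clauses) on the same variables, and thereby assume $B$ is a $2$SAT instance (resp.\ a HORN instance) in which every clause has all of its variables in a common context $U_i$; this changes neither $\msA(B)$ nor the set of satisfying assignments. For the DUAL-HORN case we apply the $*$-automorphism of $\msA_{con}(B)$ determined by $x\mapsto -x$ for every variable $x$: it carries the relations of \Cref{defn:bcs_algebra} for a dual-Horn system to those for the HORN system $\wtd B$ obtained by negating every variable, hence induces an isomorphism $\msA(B)\cong\msA(\wtd B)$; since $B$ is satisfiable iff $\wtd B$ is, the DUAL-HORN case follows from the HORN case. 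It remains to treat $2$SAT instances and HORN instances.

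For the $2$SAT case I would work with the implication graph $G_B$ on the literals, with an edge $\ell\to\ell'$ for each clause $\neg\ell\vee\ell'$ of $B$. For a literal $\ell=\epsilon x$ set $p_\ell=\tfrac12(1-\epsilon x)\in\msA(B)$, a projection with $p_{\neg\ell}=1-p_\ell$. Because the two variables of the clause $\neg\ell\vee\ell'$ commute in $\msA(B)$, \Cref{lem:presperconstraint} turns this clause into the relation $p_\ell(1-p_{\ell'})=0$, i.e.\ $p_\ell=p_\ell p_{\ell'}$; iterating along a directed path $\ell_0\to\ell_1\to\cdots\to\ell_m$ in $G_B$ gives $p_{\ell_0}=p_{\ell_0}p_{\ell_1}\cdots p_{\ell_m}$. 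If there is a path from $\ell$ to $\neg\ell$ in $G_B$, multiply the resulting identity on the right by $p_\ell$: since $p_\ell$ and $p_{\neg\ell}$ are functions of the single variable $x$ we have $p_{\neg\ell}p_\ell=(1-p_\ell)p_\ell=0$, so $p_\ell=p_\ell^2=0$. By the classical criterion, $B$ is unsatisfiable exactly when some variable $x$ has $x$ and $\neg x$ in the same strongly connected component of $G_B$; in that case the above yields both $p_x=0$ and $p_{\neg x}=0$, whence $1=p_x+p_{\neg x}=0$ in $\msA(B)$. Thus $\msA(B)\neq 0$ forces $B$ satisfiable.

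For the HORN case, put $q_z=\tfrac12(1-z)\in\msA(B)$ for each variable $z$ (the projection onto ``$z=\TRUE$''). By \Cref{lem:presperconstraint}: each positive unit clause $\{z\}$ forces $q_z=1$; each definite clause $\neg y_1\vee\cdots\vee\neg y_r\vee z$ forces $q_{y_1}\cdots q_{y_r}(1-q_z)=0$; and each purely negative clause $\neg y_1\vee\cdots\vee\neg y_r$ forces $q_{y_1}\cdots q_{y_r}=0$ (all these products being over the variables of one clause, hence commuting). Let $T$ be the forward-chaining closure: the least set of variables containing every $z$ that occurs as a positive unit clause and closed under the rule ``$y_1,\dots,y_r\in T$ and $\neg y_1\vee\cdots\vee\neg y_r\vee z$ a clause $\Rightarrow z\in T$''. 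Induction on the derivation of $z\in T$, using the first two identities, gives $q_z=1$ in $\msA(B)$ for every $z\in T$. By the standard analysis of HORN-satisfiability, $B$ is unsatisfiable precisely when some negative clause $\neg y_1\vee\cdots\vee\neg y_r$ of $B$ has all of $y_1,\dots,y_r$ in $T$ (otherwise the assignment making exactly the variables of $T$ true satisfies $B$); for such a clause the third identity reads $1=q_{y_1}\cdots q_{y_r}=0$ in $\msA(B)$. Hence again $\msA(B)\neq 0$ forces $B$ satisfiable, completing the proof.
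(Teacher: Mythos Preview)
Your argument is correct. The paper itself does not spell out a proof here: it simply cites Lemma~7 of \cite{AKS17} and observes that the proof there actually derives $1=0$ in $\msA(B)$ whenever $B$ is unsatisfiable, hence upgrades the conclusion from $C^*$-satisfiability to algebraic satisfiability. What you have written is precisely a self-contained execution of that underlying argument: for 2SAT you run the implication-graph criterion, for HORN you run unit propagation, and in each case you track the corresponding identities among the rank-one spectral projections in $\msA(B)$, showing that a classical refutation forces $1=0$. Your emphasis on only ever multiplying projections that either share a single underlying variable or come from a single clause (hence lie in one context and commute) is exactly the point that needs care and is handled correctly; in particular the 2SAT path identity $p_{\ell_0}=p_{\ell_0}p_{\ell_1}\cdots p_{\ell_m}$ requires only associativity, not commutativity across contexts, and the final cancellation uses only $p_{\neg\ell}p_\ell=0$ for orthogonal projections in the same variable. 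The DUAL-HORN reduction via the $*$-automorphism $x\mapsto -x$ is a clean way to avoid repeating the HORN argument. So your proof and the paper's (by reference) take the same route; yours is simply the unpacked version.
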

\begin{proof}
    Lemma 7 in \cite{AKS17} shows that if $B$ is a BCS with contexts over 2SAT,
    HORN, or DUAL-HORN, and $B$ is $C^*$-satisfiable, then $B$ is satisfiable.
    However, the proof shows that if $B$ is not satisfiable, then $1=0$ in
    $\msA(B)$, and hence $B$ is not algebraically satisfiable. 
\end{proof}

The main result of \cite{AKS17} is that if a constraint language $\mcL$ is not
in one of the classes in \Cref{lem:noseps}, then the first three types of
satisfiability in \Cref{def:sat_types} can be separated over $\mcL$.
\begin{theorem}[Theorem 3 of \cite{AKS17}]\label{thm:dichotomy1}
    If $\mcL$ is a constraint language which is not bijunctive, Horn, or
    dual Horn, then 
    \begin{enumerate}[(a)]
        \item there is a BCS with contexts over $\mcL$ which is 
            matrix-satisfiable but not satisfiable, and
        \item there is also a BCS with contexts over $\mcL$ which is
            $\mcR^{\mcU}$-satisfiable but not matrix-satisfiable.
    \end{enumerate}
\end{theorem}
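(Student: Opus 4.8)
The plan is to prove this via the definability machinery promised in \Cref{sec:define}: if a constraint language $\mcL$ is not bijunctive, not Horn, and not dual Horn, then Schaefer's analysis (specifically \cite[Lemma 4.1]{Sch78}, or the weak-base / clone-theoretic version in \cite{AKS17}) tells us that $\mcL$ can \emph{define} (via conjunctions, substitution of constants, and projection of existentially-quantified variables) some ``hard'' relation --- either a 3SAT-type clause, or more usefully the equality/disequality and OR relations needed to encode linear constraints. The key input we are allowed to assume is the section's main technical result: if $\mcL$ is definable from $\mcL'$, then for every BCS $B$ over $\mcL$ there is a BCS $B'$ over $\mcL'$ with $*$-homomorphisms $\msA(B)\to\msA(B')$ and $\msA(B')\to\msA(B)$ in both directions, so that $B$ and $B'$ have exactly the same satisfiability status in every one of the senses of \Cref{def:sat_types}. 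Thus it suffices to exhibit, for a single fixed ``hard'' target language, a BCS witnessing each of the two separations, and then transport those witnesses backward along definability.

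Concretely, I would proceed as follows. First, invoke Schaefer's classification lemma to conclude that a non-bijunctive, non-Horn, non-dual-Horn language $\mcL$ is ``expressive enough'': it defines every linear relation over $\Z_2$ (this is exactly the observation attributed to \cite{AKS17} in the introduction --- ``linear systems are definable from any non-Schaefer class'' --- adapted to the fact that the three excluded classes together with LIN are precisely the Schaefer classes, and LIN is itself one of them, so being outside the first three already forces enough expressiveness). Second, for part (a), take the Mermin--Peres magic square of \Cref{ex:mp_square}, which is a \emph{linear} system; it is matrix-satisfiable (dimension $4$) but not satisfiable. Since $\mcL$ defines all linear relations, there is a BCS $B'$ with contexts over $\mcL$ with $\msA(B')$ sharing satisfiability status with $\msA(B_{\mathrm{MP}})$ in both directions, so $B'$ is matrix-satisfiable but not satisfiable. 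Third, for part (b), do the same starting from \Cref{ex:connesembeddable}: the linear system of \cite{Slof17} that is $\mcR^{\mcU}$-satisfiable but not matrix-satisfiable. Transporting it along definability yields a BCS with contexts over $\mcL$ which is $\mcR^{\mcU}$-satisfiable but not matrix-satisfiable, giving (b).

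A subtlety to handle carefully: the witnesses $B_{\mathrm{MP}}$ and the Slofstra system are most naturally BCSs \emph{with contexts} (each linear equation being its own context), and the definability result must be applied at the level of BCSs with contexts over $\mcL$ --- so I need to make sure the version of the definability theorem being quoted is stated for BCSs with contexts, or else first pass to the ``default context per constraint'' reformulation and check the translation of a LIN-constraint into an $\mcL$-gadget respects the context structure. One should also double-check that ``definable'' in the precise sense used allows existentially quantified auxiliary variables, since that is needed to build linear constraints of arbitrary length out of bounded-arity relations in $\mcL$ --- but those auxiliary variables get their own contexts in $B'$ and do not disturb the algebra isomorphism up to the two-sided $*$-homomorphisms.

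The main obstacle I anticipate is \emph{not} the two transport steps (those are immediate once the definability theorem is in hand) but rather pinning down exactly which relation a non-Schaefer language is guaranteed to define and verifying that a single such relation suffices to encode \emph{all} of LIN with the context bookkeeping intact --- i.e. reproducing, in the $*$-algebra setting, the classical fact that linear systems are pp-definable (or definable in the weak-base sense) from any non-Schaefer clone. This is essentially a careful audit of the relevant lemma of \cite{Sch78} / \cite{AKS17} combined with \Cref{lem:presperconstraint} and \Cref{cor:relationpres} to see that the gadget construction, when read off at the level of BCS algebras, produces precisely a two-sided pair of $*$-homomorphisms rather than merely preserving classical satisfiability. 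Once that audit is done, the theorem follows by taking $B'$ to be the $\mcL$-encoding of the magic square for (a) and of the Slofstra system for (b).
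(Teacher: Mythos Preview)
Your proposal is correct and matches the paper's proof essentially step for step: the paper invokes \Cref{lem:lindefinable} (LIN is definable from any non-bijunctive, non-Horn, non-dual-Horn $\mcL$), applies \Cref{lem:bcs_define} (the two-sided $*$-homomorphism transport you describe) to the Mermin--Peres square for (a) and to the linear system of \cite{Slof17} for (b). The ``obstacle'' you anticipate is exactly what \Cref{lem:bcs_define} and \Cref{lem:lindefinable} establish, so there is nothing further to audit.
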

Theorem 3 of \cite{AKS17} actually shows a weaker version of part (b), that $B'$
is $C^*$-satisfiable but not matrix-satisfiable. In this section, we give a
streamlined proof of \Cref{thm:dichotomy1} using BCS algebras, from which it's
clear that part (b) holds for $\mcR^{\mcU}$-satisfiability as well. As in \cite{AKS17},
we need to consider when one constraint language can be defined from another:
\begin{definition}\label{defn:definable} A relation $R \subseteq \{\pm 1\}^k$ is
    \textbf{definable} (or \textbf{$pp$-definable}) from a boolean constraint language
    $\mcL$ if there is a BCS $B$ over $\mcL$ with variables $\{x_1,\ldots,x_k\} \cup Y$
    such that $(a_1,\ldots,a_k)\in R$ if and only if there is a satisfying assignment $\phi$
    for $B$ with $\phi(x_i) = a_i$.

    We say that a BCS $B$ (resp.~BCS with contexts) is definable from $\mcL$ if
    every relation in $B$ is definable from $\mcL$. Similarly a constraint
    language $\mcL'$ is definable from $\mcL$ if every relation in $\mcL'$ is
    definable from $\mcL$.
\end{definition}

Then we have:
\begin{lemma}\label{lem:bcs_define}
    If a BCS with contexts $B$ is definable from the constraint language
    $\mcL$, then there exists a BCS with contexts $B'$ over $\mcL$ and
    $*$-homomorphisms 
    \begin{equation*} 
        \begin{tikzcd}
            \msA(B)\arrow[hookrightarrow, bend left]{r}{\iota} &
        \msA(B')\arrow[two heads, bend left]{l}{\pi} \end{tikzcd} 
    \end{equation*} 
    such that $\pi \circ \iota = \Id_{\msA(B)}$. 
\end{lemma}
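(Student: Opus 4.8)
The plan is to build $B'$ by replacing each constraint of $B$ with a $pp$-definition over $\mcL$, introducing fresh auxiliary variables for each constraint, and then to construct the two $*$-homomorphisms at the level of presentations. First I would set up notation: write $B = (X, \{(U_i, V_i)\}_{i=1}^\ell)$ with $V_i = \{(S_{ij}, R_{ij})\}_j$, and for each relation $R_{ij}$ fix a $pp$-definition, i.e.\ a BCS $D_{ij}$ over $\mcL$ on variables $\{z_1,\ldots,z_{k_{ij}}\} \cup Y_{ij}$ with $Y_{ij}$ the auxiliary variables, such that the satisfying assignments of $D_{ij}$, projected onto the $z$-coordinates, are exactly $R_{ij}$. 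To form $B'$, take $X' = X \sqcup \bigsqcup_{i,j} Y_{ij}$ (a disjoint copy of the auxiliaries for each occurrence of each constraint), replace the constraint $(S_{ij}, R_{ij})$ by the constraints of $D_{ij}$ with the distinguished variables $z_1,\ldots,z_{k_{ij}}$ substituted by the entries of the scope $S_{ij}$, and enlarge the $i$-th context $U_i$ to $U_i' = U_i \cup \bigcup_j Y_{ij}$. Then $B'$ is a BCS with contexts over $\mcL$.

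Next I would construct $\iota : \msA(B) \to \msA(B')$. Since $\msA(B)$ is presented by $\msA_{con}(B)$ modulo the constraint projection relations (\Cref{defn:bcs_algebra}), it suffices to send each $x \in X$ to its namesake in $\msA(B')$ and check that the defining relations of $\msA(B)$ hold. The relations $x^2 = 1$, $x^* = x$ hold trivially. Commutativity $xy = yx$ for $x,y \in U_i$ holds because $x,y \in U_i \subseteq U_i'$. The content is the relation $\Pi_{U_i,\phi} = 0$ for non-satisfying $\phi$: here I would use \Cref{lem:presperconstraint} to reduce to showing $\Pi_{C,\phi} = 0$ in $\msA(B')$ for each original constraint $C = (S_{ij},R_{ij})$ and each $\phi$ with $\phi(S_{ij}) \notin R_{ij}$. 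This is exactly where the $pp$-definition is used: since $\phi(S_{ij})\notin R_{ij}$, there is no way to extend $\phi$ on the $z$-variables to a satisfying assignment of $D_{ij}$, so by applying \Cref{lem:presperconstraint} inside $\msA(B')$ to the constraints coming from $D_{ij}$, the projection $\prod_{x\in S_{ij}\cap X}\tfrac12(1+\phi(x)x)$, after multiplying by a complete family of projections on the $Y_{ij}$-variables, becomes a sum of projections each of which is $0$. (One must be slightly careful that the variables of $S_{ij}$ all lie in the context containing the $D_{ij}$-constraints, which holds by construction since $U_i' \supseteq (S_{ij}\cap X) \cup Y_{ij}$.)

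For the other direction, I would define $\pi : \msA(B') \to \msA(B)$ on generators by $\pi(x) = x$ for $x \in X$, and $\pi(y) = 1$ for every auxiliary variable $y \in Y_{ij}$ — or, more carefully, $\pi(y)$ should be the element of $\msA(B)$ obtained from the multilinear polynomial expressing the "canonical" or any fixed witness choice for $y$; the cleanest choice that makes the relations hold is to send each auxiliary variable to an explicit $\pm 1$-combination. Actually the robust approach: since for any assignment of the original variables $x\in S_{ij}$ with $\phi(S_{ij})\in R_{ij}$ there is a witness value for the $Y_{ij}$, one wants $\pi(y)$ to equal $\sum_{\phi}\phi_{\text{wit}}(y)\,\Pi_{S_{ij},\phi}$ summed over satisfying assignments — but this need not be well-defined as a unitary unless the witness is unique. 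To avoid this, I would instead observe that it is enough to produce \emph{some} $*$-homomorphism $\msA(B')\to\msA(B)$ splitting $\iota$, and the existence of such a splitting can be gotten by a representation-theoretic argument: every $*$-representation of $\msA(B)$ extends to one of $\msA(B')$ (choose witness operators), but for an algebraic splitting I would argue directly that $\msA(B')$ admits the quotient map onto $\msA(B')/\langle\langle Y\text{-variables act as chosen witnesses}\rangle\rangle \cong \msA(B)$ — here using that the $D_{ij}$ are $pp$-definitions so that forcing the $z$'s to satisfy $R_{ij}$ is consistent. Then $\pi\circ\iota = \Id_{\msA(B)}$ is immediate on generators of $X$, hence on all of $\msA(B)$, and $\iota$ is injective because it has a left inverse.

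The main obstacle is the construction of $\pi$, specifically handling auxiliary variables whose witness value is not uniquely determined by the distinguished variables of a $pp$-definition. The clean fix is to massage the $pp$-definitions so witnesses are functionally determined (one can always do this by adding the defining equations of the witnesses as extra relations, or by using that $pp$-definability over a Schaefer-type language can be taken in a "functional" form), or alternatively to define $\pi$ via a universal property: $\msA(B)$ is the colimit encoding "original variables satisfying all original constraints", and $B'$ maps to that cocone. I expect the bookkeeping of contexts (ensuring each substituted $D_{ij}$ lives entirely inside one context of $B'$, and that no unwanted commutations between different $Y_{ij}$ blocks are forced) to be the fiddly-but-routine part, while the witness-uniqueness issue for $\pi$ is the genuine mathematical point to get right.
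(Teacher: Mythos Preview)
Your construction of $B'$ and of $\iota$ is essentially the paper's argument (the paper replaces each context $(U_i,V_i)$ by one pp-definition rather than working constraint by constraint, but that is cosmetic). The real point of comparison is $\pi$, and here you had the right idea and then talked yourself out of it.

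The formula you wrote down, $\pi(y)=\sum_{\phi}h_\phi(y)\,\Pi_{U_i,\phi}$ for a chosen witness $h_\phi$, is \emph{exactly} what the paper uses, and your worry about non-unique witnesses is misplaced. Nothing requires the witness to be canonical or functionally determined: for each assignment $\phi$ to $U_i$ you simply \emph{choose} an extension $h_\phi$ to $U_i\cup Y_i$, requiring only that $h_\phi$ be satisfying for $(U_i\cup Y_i,W_i)$ whenever $\phi$ is satisfying for $(U_i,V_i)$; for non-satisfying $\phi$ any extension will do, since $\Pi_{U_i,\phi}=0$ in $\msA(B)$ anyway. The sum should run over \emph{all} assignments $\phi$, not just satisfying ones. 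Because $\{\Pi_{U_i,\phi}\}_\phi$ is a complete orthogonal family of self-adjoint projections in $\msA(B)$ and $h_\phi(y)\in\{\pm1\}$, the element $\pi(y)$ is automatically a self-adjoint unitary, and it commutes with everything in $U_i\cup Y_i$ since the $\Pi_{U_i,\phi}$ lie in the commutative subalgebra generated by $U_i$. Checking that the constraint relations of $B'$ hold under $\pi$ is then a one-line computation: $\pi(\Pi_{U_i\cup Y_i,\wtd\phi})$ equals $\Pi_{U_i,\phi_0}$ if $\wtd\phi=h_{\phi_0}$ and $0$ otherwise, so non-satisfying $\wtd\phi$ map to zero.

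Your proposed detours are both unnecessary and problematic. Forcing witnesses to be functionally determined is not generally possible without enlarging the constraint language, and the colimit/universal-property sketch is too vague to constitute a proof. The moral: arbitrary choices are fine when all you need is \emph{some} splitting; abandon the search for a canonical $\pi$ and just pick witnesses.
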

\begin{proof}
    Let $B = (X, \{U_i, V_i\}_{i=1}^{\ell})$. For each BCS $(U_i, V_i)$, there is a
    BCS $(U_i \cup Y_i, W_i)$ over $\mcL$ such that $\phi$ is a satisfying
    assignment for $(U_i,V_i)$ if and only if there is a satisfying assignment
    $\wtd{\phi}$ for $(U_i \cup Y_i, W_i)$ with $\wtd{\phi}|_{U_i} = \phi$. Let $Y$
    be the disjoint union of sets $Y_i$, and consider the BCS with contexts $B'
    = (X \cup Y, \{(U_i \cup Y_i, W_i)\}_{i=1}^{\ell})$. Since $U_i$ is contained inside
    a context of $B'$ for all $1 \leq i \leq \ell$, there is a $*$-homomorphism 
    $\wtd{\iota} : \msA_{con}(B) \to \msA_{con}(B')$ sending $x \mapsto x$ for all $x \in X$.
    If $\phi$ is an assignment to $U_i \cup Y_i$, then 
    \begin{align*}
        \Pi_{U_i \cup Y_i,\wtd{\phi}} %&=\prod_{w\in U_i\cup Y_i}\left(\frac{1+\tilde{\phi}(w)w}{2}\right)
        =\prod_{x\in U_i} \left(\frac{1+\tilde{\phi}(x)x}{2}\right) \prod_{y \in Y_i} \left(\frac{1+\tilde{\phi}(y)y}{2}\right)
        =\wtd{\iota}(\Pi_{U_i,\phi}) \Pi_{Y_i, \wtd{\phi}},
    \end{align*}
    where $\phi = \wtd{\phi}|_{U_i}$, and $\Pi_{Y_i,\wtd{\phi}} := \prod_{y \in Y_i} \tfrac{1}{2} (1+\wtd{\phi}(y)y)$.
    For every assignment $\phi$ to $U_i$, we have 
    \begin{equation*}
        \sum_{\wtd{\phi}|_{U_i} = \phi} \Pi_{Y_i,\wtd{\phi}} = 1,
    \end{equation*}
    where the sum is over assignments $\wtd{\phi}$ to $U_i \cup Y_i$ with $\wtd{\phi}|_{U_i} = \phi$.
    If $\phi$ is a non-satisfying assignment for $(U_i,V_i)$, then every assignment $\wtd{\phi}$ to $U_i \cup Y_i$
    with $\wtd{\phi}|_{U_i} = \phi$ is a non-satisfying assignment to $(U_i \cup  Y_i, W_i)$. Therefore
    \begin{equation*}
        \wtd{\iota}(\Pi_{U_i,\phi}) = \sum_{\wtd{\phi}|_{U_i} = \phi} \wtd{\iota}(\Pi_{U_i,\phi}) \Pi_{Y_i,\wtd{\phi}}
            = \sum_{\wtd{\phi}|_{U_i} = \phi} \Pi_{U_i \cup Y_i, \wtd{\phi}}
    \end{equation*}
    vanishes in $\msA(B')$ for all non-satisfying assignments $\phi$ to $(U_i,V_i)$. We conclude that $\wtd{\iota}$ induces
    a homomorphism $\iota : \msA(B) \to \msA(B')$ sending $x \mapsto x$ for all $x \in X$.

    For the other direction, for each $1 \leq i \leq \ell$ and assignment
    $\phi$ to $U_i$, choose an assignment $h_{\phi}$ to $U_i \cup Y_i$, such
    that $h_{\phi}|_{U_i} = \phi$, and if $\phi$ is satisfying for $(U_i, V_i)$ then $h_{\phi}$ is satisfying
    for $(U_i \cup Y_i, W_i)$. Define $\wtd{\pi} : \msA_{con}(B') \to
    \msA_{con}(B)$ by $\wtd{\pi}(x) = x$ for all $x \in X$, and
    \begin{equation*}
        \wtd{\pi}(y) = \sum_{\phi} h_\phi(y) \Pi_{U_i,\phi}
    \end{equation*}
    if $y \in Y_i$, $1 \leq i \leq \ell$. Since $\wtd{\pi}(y)^* = \wtd{\pi}(y)$ and $\wtd{\pi}(y)^2 = 1$ for
    all $y \in Y$, and $\wtd{\pi}(z) \wtd{\pi}(w) = \wtd{\pi}(w) \wtd{\pi}(z)$
    commute for all $z,w \in U_i \cup Y_i$, the homomorphism $\wtd{\pi}$ is well-defined. 
    For any $a \in \{\pm 1\}$, we have that
    \begin{equation*}
        \tfrac{1}{2} (1 + a \pi(y)) = \sum_{\phi : h_{\phi}(y) = a} \Pi_{U_i,\phi}.
    \end{equation*}
    Suppose $\wtd{\phi}$ is an assignment to $U_i \cup Y_i$, and let $\phi_0 = \wtd{\phi}|_{U_i}$. Then 
    \begin{equation*}
        \wtd{\pi}(\Pi_{U_i \cup Y_i, \wtd{\phi}}) = \Pi_{U_i, \phi_0} \cdot \prod_{y \in Y_i} \left( \sum_{\phi : h_{\phi}(y) = \wtd{\phi}(y)} \Pi_{U_i,\phi} \right)
 = \begin{cases} \Pi_{U_i,\phi_0} & \wtd{\phi} = h_{\phi_0} \\
                                0 & \text{otherwise}
                            \end{cases}.
    \end{equation*}
    By construction, if $h_{\phi}$ is a non-satisfying assignment to $(U_i \cup Y_i, W_i)$, then 
    $\phi$ is a non-satisfying assignment to $(U_i, V_i)$. Hence if $\wtd{\phi}$ is a non-satisfying
    assignment to $(U_i \cup Y_i, W_i)$, then $\wtd{\pi}(\Pi_{U_i \cup Y_i, \wtd{\phi}})$ vanishes
    in $\msA(B)$. We conclude that $\wtd{\pi}$ induces a homomorphism $\pi : \msA(B') \to \msA(B)$
    with the property that $\pi \circ \iota = \Id_{\msA(B)}$. 
\end{proof}

To finish the proof of \Cref{thm:dichotomy1}, we need the following observation
from \cite{AKS17}:
\begin{lemma}[Lemma 8 of \cite{AKS17}]\label{lem:lindefinable}
    If $\mcL$ is a constraint language which is not bijunctive, Horn, or dual
    Horn, then LIN is definable from $\mcL$.
\end{lemma}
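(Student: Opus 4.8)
The statement is Lemma 8 of \cite{AKS17}, and the natural route is to invoke Schaefer's classification of closed classes of relations (Post's lattice / Schaefer's dichotomy analysis) to extract a linear relation from any non-Schaefer language. The plan is as follows. Recall that a relation $R$ is \emph{definable} from $\mcL$ exactly when $R$ lies in the relational clone (co-clone) generated by $\mcL$ together with equality; since we allow constants in scopes, we may freely use the constant unary relations $\{1\}$ and $\{-1\}$. By Schaefer's analysis, if $\mcL$ is not bijunctive, not Horn, not dual Horn, and not linear, then the co-clone generated by $\mcL$ (with constants) is all of the relations, and in particular every linear relation is definable; so the only case requiring work is when $\mcL$ is linear but not a subset of $2$SAT, HORN, or DUAL-HORN. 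But if $\mcL \subseteq \LIN$ then LIN itself is trivially definable from $\mcL$ only if $\mcL$ actually generates all of LIN, which need not hold (e.g. $\mcL$ could consist only of relations of the form $x_i x_j = 1$). Hence the real content is: a \emph{linear} language that is not bijunctive, Horn, or dual Horn must already pp-define (with constants) all linear relations.

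So the key steps are: (i) reduce to the case $\mcL \subseteq \LIN$, handling the genuinely non-Schaefer case by Schaefer's theorem as above; (ii) in the linear case, observe that a single linear equation $x_1^{\epsilon_1}\cdots x_n^{\epsilon_n} = (-1)^b$ with $n \geq 3$ is not bijunctive (its solution set is an affine subspace of dimension $n-1$, which for $n \geq 3$ is not bijunctive since $2$SAT relations of high arity that are affine must be decomposable into $2$-clauses, forcing the parity constraint to have arity $\leq 2$), and similarly such a relation is neither Horn nor dual Horn for $n \geq 3$; conversely every linear equation of arity $\leq 2$ is bijunctive. Therefore $\mcL$ not being bijunctive forces some relation $R \in \mcL$ whose defining linear system involves an equation with at least three variables, equivalently a relation containing (after substituting constants) the $3$-variable parity relation $\{x_1 x_2 x_3 = 1\}$ or $\{x_1 x_2 x_3 = -1\}$. (iii) From a single $3$-variable parity relation, plus constants and equality, pp-define arbitrary-length parities by chaining: $x_1 \cdots x_n = c$ is obtained by introducing auxiliary variables $y_2, \ldots, y_{n-2}$ and imposing $x_1 x_2 y_2 = 1$, $y_2 x_3 y_3 = 1$, \ldots, $y_{n-2} x_{n-1} x_n = c$; existentially quantifying the $y_j$ recovers exactly the affine hyperplane. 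Combined with constants this yields every linear relation, since every affine subspace of $\{\pm 1\}^k$ is a conjunction of parity equations, each of which is now pp-definable, and a conjunction of pp-definable relations is pp-definable.

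The main obstacle I expect is step (ii): carefully verifying that a linear equation on $\geq 3$ variables is not bijunctive, Horn, or dual Horn — this is where one must actually compute with the closure properties (the Post-algebraic characterizations: a relation is bijunctive iff closed under ternary majority, Horn iff closed under binary $\wedge$, dual Horn iff closed under binary $\vee$) and check that the parity relation on three bits fails all three closure conditions. This is the one genuinely technical point; the rest is bookkeeping with pp-definitions. Modulo that, and modulo quoting Schaefer's dichotomy for the non-linear case, the lemma follows. Since this is exactly Lemma 8 of \cite{AKS17}, in the paper I would simply cite that reference rather than reproduce the argument, but the sketch above indicates how it goes.
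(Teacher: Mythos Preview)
Your approach is workable but takes a different route from the paper, and there is a small gap worth flagging.

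\textbf{How the paper argues.} The paper does not case-split on whether $\mcL$ is linear. Instead it observes that the statement actually appearing as Lemma~8 in \cite{AKS17} is the \emph{constant-free} version: if $\mcL$ is not bijunctive, Horn, dual Horn, \emph{$0$-valid, or $1$-valid}, then LIN is definable from $\mcL$ without using constants. The paper then closes $\mcL$ under substitution of constants to get $\mcL'$, notes $\mcL' \supseteq \mcL$ is still not bijunctive/Horn/dual Horn, and checks that $\mcL'$ cannot be $0$-valid or $1$-valid (else one deduces that every $R \in \mcL$ is a purely negative, resp.\ purely positive, clause, forcing $\mcL$ to be Horn or dual Horn). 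So the constant-free lemma applies to $\mcL'$, giving LIN definable from $\mcL$ with constants. In particular, simply citing \cite{AKS17} is not quite enough: the with-constants statement needs this short extra step about $0$/$1$-validity, which your proposal does not mention.

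\textbf{Your route, and the gap.} Your case-split (invoke Schaefer's expressibility theorem for non-linear $\mcL$; handle linear $\mcL$ by hand) is a legitimate alternative. But step~(ii) has a gap. You correctly note that if $\mcL \subseteq \LIN$ is not bijunctive, some $R \in \mcL$ must have a defining \emph{system} containing an equation on $\geq 3$ variables. However, $R$ is in general an affine subspace cut out by several coupled equations, and you only invoke ``substituting constants'' to extract a $3$-parity; that alone does not suffice. What works is: put the defining system in reduced row echelon form, take a row with pivot $x_p$ and at least two free variables $x_{j_1},x_{j_2}$, set all other free variables to constants, and then \emph{existentially project out} the remaining pivot variables (each appears in exactly one surviving equation, so projecting removes that equation). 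This leaves exactly the single $3$-variable parity $x_p x_{j_1} x_{j_2} = c$. With that fix your argument goes through; step~(iii) is fine as written. (Incidentally, for affine $R$ one has $\text{LIN}\cap\text{HORN}\subseteq 2\text{SAT}$, since closure under $\wedge$ plus closure under $x\oplus y\oplus z$ gives closure under $\vee$ and hence under majority; so using only ``not bijunctive'' in step~(ii) is harmless.)
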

\begin{proof}
    A BCS without constants is a BCS $B = (X,\{C_i\}_{i=1}^m)$ in which the
    scope $S_i$ of every constraint $C_i$ is a sequence in $X$.  A relation $R$
    is definable without constants from a constraint language $\mcL$ if it is
    definable from $\mcL$ as in \Cref{defn:definable} via BCS $B$ without constants.
    A BCS is definable without constants from $\mcL$ if every relation is definable
    without constants. A constraint language is said to be $0$-valid (resp. $1$-valid)
    if every relation contains $(1,\ldots,1)$ (resp. $(-1,\ldots,-1)$). 
    Lemma 8 of \cite{AKS17} states that if $\mcL$ is not bijunctive, Horn, dual Horn,
    $0$-valid, or $1$-valid, then LIN is definable without constants from $\mcL$.

    Let $\mcL$ be a constraint language which is not bijunctive, Horn, or dual
    Horn, and let $\mcL'$ be the constraint language consisting of all
    relations which can be derived by inserting constants into the relations of
    $\mcL$. Since $\mcL\subseteq \mcL'$, $\mcL'$ is not bijunctive, Horn, or
    dual Horn. And $\mcL'$ is not $0$-valid or $1$-valid, since it is not hard to
    see that then $\mcL$ would be Horn or dual Horn. So LIN is definable from
    $\mcL'$ without constants, and hence LIN is definable from $\mcL$ with constants.
\end{proof}

\begin{proof}[Proof of \Cref{thm:dichotomy1}]
    Suppose $\mcL$ is not bijunctive, Horn, or dual Horn, 
    and let $B$ be the Mermin-Peres magic square from \Cref{ex:mp_square}.
    Since $B$ is a constraint system over LIN, $B$ is definable from $\mcL$ by
    \Cref{lem:lindefinable}. By \Cref{lem:bcs_define}, there is a BCS with
    contexts $B'$ over $\mcL$ such that there are homomorphisms $\iota :
    \msA(B) \to \msA(B')$ and $\pi : \msA(B') \to \msA(B)$. Thus for any
    $*$-algebra $\mcC$, there is a homomorphism $\msA(B) \to \mcC$ if and only
    if there is a homomorphism $\msA(B') \to \mcC$.  Since $B$ is
    matrix-satisfiable but not satisfiable, so is $B'$. Thus there is a
    constraint system over $\mcL$ which is matrix-satisfiable but not
    satisfiable. 

    If we replace the Mermin-Peres square with the constraint system from 
    \Cref{ex:connesembeddable} which is $\mcR^{\mcU}$-satisfiable but not
    matrix-satisfiable, then the same argument shows that there is a
    constraint system over $\mcL$ which is $\mcR^{\mcU}$-satisfiable but not
    matrix-satisfiable.
\end{proof}
More generally, suppose that there is a constraint system $B$ with contexts
over LIN such that there is a $*$-homomorphism $\msA(B) \to \mcC$ for some
$*$-algebra $\mcC$, but no $*$-homomorphism $\msA(B) \to \mcC'$ for any $\mcC'$
in some given family of $*$-algebras. Then \Cref{lem:bcs_define} and
\Cref{lem:lindefinable} imply that for every constraint language $\mcL$ which
is not bijunctive, Horn, or dual Horn, there is a
constraint system with contexts $B'$ over $\mcL$ such that there is a
homomorphism $\msA(B') \to \mcC$, but no $*$-homomorphism to any $\mcC'$.  In
particular:
\begin{prop}
    If there is a constraint system with contexts over LIN which is
    tracially-satisfiable but not $\mcR^{\mcU}$-satisfiable, then for every
    constraint language $\mcL$ which is not bijunctive, Horn, or dual Horn,
    there is a constraint system over $\mcL$ which is tracially-satisfiable but
    not $\mcR^{\mcU}$-satisfiable. 
\end{prop}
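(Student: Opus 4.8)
The plan is to run the argument from the proof of \Cref{thm:dichotomy1} essentially verbatim, with the hypothesized constraint system over LIN playing the role that the Mermin--Peres square (and the system of \Cref{ex:connesembeddable}) played there. Concretely: assume $B$ is a constraint system with contexts over LIN which is tracially-satisfiable but not $\mcR^{\mcU}$-satisfiable, and let $\mcL$ be a constraint language which is not bijunctive, Horn, or dual Horn. Since every relation of $B$ lies in LIN and LIN is definable from $\mcL$ by \Cref{lem:lindefinable}, the system $B$ is definable from $\mcL$; so \Cref{lem:bcs_define} produces a BCS with contexts $B'$ over $\mcL$ together with $*$-homomorphisms $\iota : \msA(B) \to \msA(B')$ and $\pi : \msA(B') \to \msA(B)$ satisfying $\pi \circ \iota = \Id_{\msA(B)}$. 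I claim $B'$ works.

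The remaining task is to transport the two properties of $B$ across the pair $(\iota,\pi)$. If $\tau$ is a tracial state on $\msA(B)$, then $\tau \circ \pi$ is a tracial state on $\msA(B')$: it is self-adjoint and sends positive elements to nonnegative reals because $\pi$ is a $*$-homomorphism, it is unital because $\pi(1) = 1$, and it is tracial because $\pi$ is multiplicative and $\tau$ is tracial; hence $B'$ is tracially-satisfiable. Conversely, were $\psi : \msA(B') \to \mcR^{\mcU}$ a $*$-homomorphism, then $\psi \circ \iota : \msA(B) \to \mcR^{\mcU}$ would be one too, contradicting the hypothesis on $B$; hence $B'$ is not $\mcR^{\mcU}$-satisfiable. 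This is the content of the ``more generally'' remark preceding the proposition applied with $\mcC = \mcR^{\mcU}$, together with the analogous (and equally easy) transport of tracial states. One caveat: the proposition says ``constraint system over $\mcL$'', whereas the construction strictly yields a BCS with contexts over $\mcL$, exactly as in \Cref{thm:dichotomy1}; whether this can be converted into a context-free constraint system over $\mcL$ depends on $\mcL$ (cf.\ \Cref{ex:commute2}).

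I do not expect a genuine obstacle here: the statement is an immediate corollary of \Cref{lem:bcs_define} and \Cref{lem:lindefinable}, the only substantive point being the observation that both tracial-satisfiability and $\mcR^{\mcU}$-satisfiability are carried by a split injection/surjection pair of $*$-algebras --- tracial states pull back along the surjection $\pi$, and $*$-homomorphisms into $\mcR^{\mcU}$ pull back along the injection $\iota$. The one spot warranting a line of verification, since the preceding remark was phrased only for $*$-homomorphisms into a fixed target, is that $\tau \circ \pi$ is genuinely a state (positivity and normalization), and that is immediate from $\pi$ being a unital $*$-homomorphism.
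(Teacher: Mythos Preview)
Your proposal is correct and matches the paper's approach exactly: the proposition is stated in the paper as an immediate consequence of the ``More generally'' paragraph preceding it (which in turn rests on \Cref{lem:bcs_define} and \Cref{lem:lindefinable}), with no separate proof given. Your explicit verification that tracial states pull back along the surjection $\pi$ is a small elaboration of a detail the paper leaves implicit; alternatively one can read the paper's remark as handling tracial-satisfiability via the GNS representation (so that tracial-satisfiability becomes ``has a $*$-homomorphism to some tracial von Neumann algebra $\mcC$''), but your direct argument is equally valid and arguably cleaner.
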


\begin{lemma}[\cite{CLS17, SV19}]\label{lem:lin_satisfiable}
    Let $B$ be the BCS with contexts corresponding to a linear system
    $Ax=b$ over $\Z_2$. Let $J$ be the central element of the solution group 
    $\Gamma(A,b)$ from \Cref{ex:linear}. Then:
    \begin{enumerate}[(a)]
        \item $B$ is tracially satisfiable if and only if $J \neq 1$ in $\Gamma(A,b)$.
        \item $B$ is $\mcR^{\mcU}$-satisfiable if and only if $J$ is non-trivial
            in approximate representations of $\Gamma(A,b)$. 
    \end{enumerate}
\end{lemma}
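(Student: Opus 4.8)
The plan is to reduce both statements to facts about the solution group $\Gamma(A,b)$ using the identification $\msA(B) = \C\Gamma(A,b)/\langle J = -1\rangle$ from \Cref{ex:linear}. The key structural point is that $J$ is central of order dividing $2$ in $\Gamma(A,b)$, so every $*$-representation of $\C\Gamma(A,b)$ decomposes according to whether $J$ acts as $+1$ or $-1$; the summand on which $J = -1$ is exactly a $*$-representation of $\msA(B)$, and conversely. So representations (and states, traces) of $\msA(B)$ correspond precisely to representations (states, traces) of $\C\Gamma(A,b)$ on which $J$ acts by $-1$. I would open the proof by recording this correspondence once, since both parts use it.

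For part (a): if $J \neq 1$ in $\Gamma(A,b)$, then $J$ has order exactly $2$, and the canonical tracial state on the group algebra $\C\Gamma(A,b)$ (the one sending $g \mapsto \delta_{g,1}$) gives, via the GNS construction, a $*$-representation $\pi$ of $\C\Gamma(A,b)$ on $\ell^2(\Gamma(A,b))$ with a cyclic tracial vector. Since $J$ is a central involution $\neq 1$, the spectral projection $P_{-1}$ onto the $-1$-eigenspace of $\pi(J)$ is nonzero (indeed the two eigenprojections are the averages $\tfrac12(1 \pm \pi(J))$, and $\pi(J) \neq \Id$ because $\lambda \neq 1$). Compressing the canonical trace to $P_{-1}\ell^2(\Gamma)$ and normalizing gives a tracial state on $P_{-1}\mcB(\ell^2(\Gamma))P_{-1}$, hence on the image of $\msA(B)$, hence a tracial state on $\msA(B)$: so $B$ is tracially satisfiable. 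Conversely, if $J = 1$ in $\Gamma(A,b)$, then in $\msA(B) = \C\Gamma(A,b)/\langle J = -1\rangle$ we have $1 = J = -1$, so $\msA(B) = 0$ and there is no state at all, tracial or otherwise. This direction is immediate; the first direction is where one must be a little careful that the compressed functional is genuinely positive and tracial, but that is routine operator algebra.

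For part (b): "$J$ is non-trivial in approximate representations of $\Gamma(A,b)$" should be interpreted (following \cite{CLS17, SV19}) as: there is a sequence of finite-dimensional unitary representations $\rho_n : \Gamma(A,b) \to U(d_n)$ that are asymptotically multiplicative in normalized Hilbert–Schmidt norm and for which $\rho_n(J)$ stays bounded away from $\Id$ in that norm; equivalently, there is a $*$-homomorphism $\C\Gamma(A,b) \to \mcR^{\mcU}$ sending $J$ to something $\neq 1$. Given such a homomorphism $\Phi$, the element $\Phi(J)$ is a central self-adjoint unitary in $\mcR^{\mcU}$ with $\Phi(J) \neq 1$; its $-1$-spectral projection $q = \tfrac12(1 - \Phi(J))$ is a nonzero projection, and the corner $q\mcR^{\mcU}q$ is again (isomorphic to) $\mcR^{\mcU}$ since $\mcR^{\mcU}$ is a II$_1$ factor and any nonzero corner of $\mcR^{\mcU}$ by a projection is isomorphic to $\mcR^{\mcU}$. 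Composing with the quotient $\C\Gamma(A,b) \to \msA(B)$ (legitimate because $\Phi(J) = -1$ on the corner) gives a $*$-homomorphism $\msA(B) \to q\mcR^{\mcU}q \cong \mcR^{\mcU}$, so $B$ is $\mcR^{\mcU}$-satisfiable. Conversely, a $*$-homomorphism $\msA(B) \to \mcR^{\mcU}$ pulls back along $\C\Gamma(A,b) \to \msA(B)$ to a homomorphism $\C\Gamma(A,b) \to \mcR^{\mcU}$ sending $J \mapsto -1 \neq 1$, which unwinds to the asymptotic-representation statement.

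The main obstacle is pinning down the precise meaning of "non-trivial in approximate representations of $\Gamma(A,b)$" so that the equivalence is literally correct: one needs the standard dictionary between homomorphisms into $\mcR^{\mcU}$ and sofic/hyperlinear-type approximate representations, and one must be slightly careful that "$J$ non-trivial" is the right normalization (e.g. $\|\rho_n(J) - \Id\|_{\mathrm{hs}}$ bounded below, as opposed to merely $\rho_n(J) \neq \Id$, which would be vacuous). Once that definition is fixed to match \cite{CLS17, SV19}, everything else is the corner-of-$\mcR^{\mcU}$ argument above. I would therefore cite \cite{CLS17, SV19} for the precise form of the statement and present parts (a) and (b) in parallel, factoring out the "central involution $J$, pass to the $-1$-eigenspace" move as the common mechanism.
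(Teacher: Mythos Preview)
The paper does not give its own proof of this lemma: it simply states that part~(a) is proved in \cite{CLS17} and part~(b) in \cite{SV19}, and then spells out the definition of ``non-trivial in approximate representations'' together with its equivalence to the existence of a homomorphism $\C G \to \mcR^{\mcU}$ with $\psi(w) \neq 1$. Your proposal supplies a correct self-contained argument, and the mechanism you isolate --- split along the central involution $J$ via the projection $\tfrac12(1-J)$, and use that $\msA(B)$ is exactly the corner where $J$ acts as $-1$ --- is precisely the standard one from those references.

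Two minor comments. In part~(a) the GNS detour is unnecessary: since $p = \tfrac12(1-J)$ is a central projection in $\C\Gamma(A,b)$ with $\tau(p) = \tfrac12$ (using $\tau(J)=0$ when $J\neq 1$), the functional $a \mapsto 2\tau(p\,\wtd a)$ on $\msA(B) \cong p\,\C\Gamma(A,b)$ is directly a tracial state, with no need to pass to $\ell^2(\Gamma)$. Also, in your parenthetical ``$\pi(J) \neq \Id$ because $\lambda \neq 1$'' the $\lambda$ is a slip for $J$. In part~(b) your use of $q\mcR^{\mcU}q \cong \mcR^{\mcU}$ is correct (the fundamental group of $\mcR^{\mcU}$ is $\mathbb{R}_{>0}$), and your care about the precise meaning of ``non-trivial in approximate representations'' matches exactly what the paper clarifies in the paragraph following the lemma.
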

Part (a) of \Cref{lem:lin_satisfiable} is proved in \cite{CLS17}, and part (b) is
proved in \cite{SV19}.  Here an element $w$ of a finitely presented group $G =
\langle S : R \rangle$ is said to be non-trivial in approximate representations
of $G$ if and only if for any representative word $w_0$ for $w$ and $\eps > 0$
there is a representation $\phi : \mcF(S) \to U(M_n \C)$ such that
$\|\phi(r)-\Id\|_2 \leq \eps$ for all $r \in S$ and $\|\phi(w_0)-\Id\|_2 \geq
1/4$, where $\mcF(S)$ is the free group generated by $S$ and $\|\cdot\|_2$ is
the normalized Frobenius norm. The constant $1/4$ here is somewhat arbitrary,
and can be replaced with any constant in $(0,\sqrt{2})$. Equivalently $w$ is
non-trivial in approximate representations if and only if there is a
homomorphism $\psi : \C G \to \mcR^{\mcU}$ such that $\psi(w) \neq 1$ (see,
e.g., \cite{CL15} for more background). 

Recall that a finitely-presented group $G$ is \textbf{hyperlinear} if every
non-trivial element is non-trivial in approximate representations. 
\begin{prop}
    There is a constraint system $B$ with contexts over LIN such that $B$ is
    tracially-satisfiable but not $\mcR^{\mcU}$-satisfiable if and only if
    there is a non-hyperlinear finitely presented group.
\end{prop}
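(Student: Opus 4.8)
The plan is to deduce both implications from Example~\ref{ex:linear} and Lemma~\ref{lem:lin_satisfiable}, which translate the two relevant satisfiability properties of a linear-system BCS into statements about the central element $J$ of the solution group $\Gamma(A,b)$; these two statements match up exactly with the two halves of the definition of hyperlinearity.

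For the forward direction, suppose $B$ is a BCS with contexts over LIN which is tracially satisfiable but not $\mcR^{\mcU}$-satisfiable. After combining the linear constraints inside each context into a single linear relation (which does not change $\msA(B)$), we may assume $B$ corresponds to a linear system $Ax=b$ over $\Z_2$ as in Example~\ref{ex:linear}. The solution group $\Gamma(A,b)$ is finitely presented, with the finite generating set $X \cup \{J\}$ and the finite list of defining relations (1)--(4) of Example~\ref{ex:linear}. By Lemma~\ref{lem:lin_satisfiable}(a), tracial satisfiability of $B$ forces $J \neq 1$ in $\Gamma(A,b)$; by Lemma~\ref{lem:lin_satisfiable}(b), the failure of $\mcR^{\mcU}$-satisfiability means precisely that $J$ fails to be non-trivial in approximate representations of $\Gamma(A,b)$. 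Thus $\Gamma(A,b)$ is a finitely presented group with a non-trivial element ($J$) that is trivial in approximate representations, so it is not hyperlinear.

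For the converse, suppose $G = \langle S : R\rangle$ is a non-hyperlinear finitely presented group, and fix $w \in G$ with $w \neq 1$ but $w$ trivial in approximate representations of $G$. Here I would invoke the embedding theorem for groups arising from linear system games \cite{Slof19, SV19}: from the presentation of $G$ and the distinguished element $w$, one constructs a linear system $Ax=b$ over $\Z_2$ whose solution group $\Gamma(A,b)$ satisfies $J \neq 1$ in $\Gamma(A,b)$ iff $w \neq 1$ in $G$, and $J$ non-trivial in approximate representations of $\Gamma(A,b)$ iff $w$ is non-trivial in approximate representations of $G$ --- that is, the central element $J$ of $\Gamma(A,b)$ inherits the hyperlinear profile of $w$. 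Taking $B$ to be the BCS with contexts over LIN corresponding to this $Ax=b$, we get $J \neq 1$ in $\Gamma(A,b)$ while $J$ is trivial in approximate representations, so Lemma~\ref{lem:lin_satisfiable} shows $B$ is tracially satisfiable but not $\mcR^{\mcU}$-satisfiable.

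The forward direction is routine bookkeeping once Lemma~\ref{lem:lin_satisfiable} is in hand. The main obstacle --- the step I would cite rather than reprove --- is the embedding theorem in the converse: the element $J$ is central in every solution group, whereas the witness $w$ of non-hyperlinearity need not be central in $G$, so one cannot simply relabel $w$ as $J$. Real work is needed to build a linear system from an arbitrary group presentation in such a way that the hyperlinear profile of a chosen (possibly non-central) element is transferred to the central element of the resulting solution group, which is typically much larger than $G$.
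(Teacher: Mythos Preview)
Your overall strategy matches the paper's: use \Cref{lem:lin_satisfiable} to translate both satisfiability conditions into statements about $J$ in $\Gamma(A,b)$, and for the converse invoke the embedding machinery of \cite{Slof19}. The forward direction is fine, though the paper is slightly more careful about reducing an arbitrary BCS with contexts over LIN to a linear system in the standard form of \Cref{ex:linear}: rather than ``combining constraints in a context,'' it uses \Cref{ex:commute2} to add ancillary variables enforcing any extra commutation coming from contexts larger than individual equation scopes.

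The substantive difference is in the converse. You package the whole thing as a single ``embedding theorem'' from \cite{Slof19,SV19} taking an arbitrary pair $(G,w)$ to a linear system whose $J$ has the same hyperlinear profile as $w$. The cited references do not state the result in that form: \cite[Theorem~3.1]{Slof19} embeds a finitely presented group $G'$ into some $\Gamma(A,b)$ sending a \emph{distinguished central involution} of $G'$ to $J$. The paper bridges the gap you correctly flag (that $w$ need not be central) with an explicit construction: first arrange that $w$ has infinite order by passing to $G\ast G$ and replacing $w$ by $\iota_1(w)\iota_2(w)$; then form the HNN extension $G'$ of $G\times\Z_2$ along $w\mapsto wJ$, so that $J$ becomes central of order two, remains nontrivial (by injectivity of the HNN base inclusion), and is forced to be trivial in any $\mcR^{\mcU}$-representation because $J=twt^{-1}w^{-1}$ and $w$ is already trivial there. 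Only then does \cite[Theorem~3.1]{Slof19} apply. So your outline is right, but the step you defer to citation is precisely the HNN-extension argument the paper actually carries out.
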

\begin{proof}
    Suppose $B$ is a constraint system with contexts over LIN such that $B$ is
    tracially satisfiable but not $\mcR^{\mcU}$-satisfiable. As discussed just
    after the definition of LIN, we can assume that all the relations of $B$
    are linear relations. By adding ancillary variables and relations as in
    \Cref{ex:commute2}, we can further assume that $B$ is a linear system $Ax=b$,
    regarded as a BCS with contexts in the default way. By
    \Cref{lem:lin_satisfiable}, the element $J$ is non-trivial in
    $\Gamma(A,b)$, but trivial in approximate representations. Hence $\Gamma(A,b)$
    is non-hyperlinear.

    Conversely, suppose that $G = \langle S : R \rangle$ is a non-hyperlinear
    finitely-presented group, and let $w$ be a non-trivial element of $G$ which
    is trivial in approximate representations. We can assume that $w$ has infinite
    order; indeed, if $w$ does not have infinite order, then we can replace $G$
    with $G \ast G$ and $w$ with $\iota_1(w) \iota_2(w)$, where $\iota_1,
    \iota_2 : G \to G \ast G$ are the inclusions into the first and second
    factor respectively. Let $J$ be the generator of $\Z_2$ in the product $G
    \times \Z_2$. Since $x$ has infinite order, the subgroups $\langle w \rangle$
    and $\langle w J \rangle$ of $G$ are both isomorphic to $\Z$. Let 
    \begin{equation*}
        G' = \langle S \cup \{J,t\} : R \cup \{sJ=Js : s \in S\} \cup \{J^2=1, twt^{-1} = wJ\} \rangle
    \end{equation*}
    be the HNN extension of $G \times \Z_2$ by the map $w \mapsto wJ$. Then the homomorphism
    $G \times \Z_2$ into $G'$ is injective, so $J$ is non-trivial in $G'$. But if
    $\phi : \C G' \to \mcR^{\mcU}$ is a homomorphism, then $\phi(J) = \phi(t w t^{-1} w^{-1}) = 1$,
    since $\phi(w) = 1$. Hence $J$ is trivial in approximate representations of $G'$. 

    By \cite[Theorem 3.1]{Slof19}, there is a linear system $Ax=b$ and an
    injective homomorphism $\psi : G' \mapsto \Gamma(A,b)$ such that $\psi(J) = J$. 
    So the element $J \in \Gamma(A,b)$ is non-trivial, but trivial in approximate
    representations. Hence $Ax=b$ is tracially satisfiable, but not
    $\mcR^{\mcU}$-satisfiable.
\end{proof}
\Cref{thm:dichotomy1} cannot hold as stated for separations between between
tracial satisfiability, $C^*$-satisfiability, and algebraic satisfiability, 
since tracial satisfiability, $C^*$-satisfiability, and algebraic
satisfiability are all equivalent for constraint systems over LIN:
\begin{cor}\label{cor:nofun}
    If $B$ is a BCS with contexts over LIN, then $B$ is tracially satisfiable if and only
    if $B$ is algebraically satisfiable.
\end{cor}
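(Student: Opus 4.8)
The plan rests on recognising this as a statement about solution groups. One direction needs nothing new: by the chain of implications recorded just after \Cref{def:sat_types}, tracial satisfiability implies $C^*$-satisfiability, which implies algebraic satisfiability. So the whole content is the converse --- that a BCS with contexts $B$ over LIN which is algebraically satisfiable is tracially satisfiable.

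First I would reduce to the case where $B$ is literally (the BCS with contexts attached, in the default way, to) a linear system $Ax=b$ over $\Z_2$. This is the same reduction used in the proof of the preceding proposition: since $\msA(B)$ depends only on the sets of satisfying assignments to the pairs $(U_i,V_i)$, and a conjunction of linear relations on $U_i$ is again the solution set of a linear system in the variables $U_i$, we may first assume every $V_i$ consists of linear equations; then, adjoining for each pair of variables $x,y$ lying in a common context $U_i$ a fresh variable $z_{xy}$ and the linear constraint $xyz_{xy}=1$ as in \Cref{ex:commute2}, we obtain a linear system $Ax=b$ whose associated BCS with contexts $B'$ satisfies $\msA(B')\iso\msA(B)$ --- the new variables are forced to equal $xy$, and the only other effect of the new constraints is to impose commutativity relations on $X$ that already hold in $\msA(B)$ because the variables in question lie in a common context. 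In particular $B$ is algebraically (resp.\ tracially) satisfiable if and only if $B'$ is, so we may assume $B=B'$ corresponds to $Ax=b$.

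Now I would invoke \Cref{ex:linear}, which gives $\msA(B)=\C\Gamma(A,b)/\langle J=-1\rangle$ for the solution group $\Gamma(A,b)$ with its distinguished central element $J$ of order dividing $2$. Suppose $B$ is algebraically satisfiable, i.e.\ $1\neq 0$ in $\msA(B)$. If $J=1$ held in $\Gamma(A,b)$, then the image of $J$ in $\msA(B)$ would be both $1$ and $-1$, forcing $2=0$ and hence $1=0$, a contradiction; so $J\neq 1$ in $\Gamma(A,b)$. By \Cref{lem:lin_satisfiable}(a), this is exactly the condition that $B$ is tracially satisfiable, which finishes the proof.

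The argument is essentially a bookkeeping exercise once \Cref{ex:linear} and \Cref{lem:lin_satisfiable} are in hand; the only place requiring any care is the verification that the ancillary variables and constraints introduced in the reduction leave $\msA(B)$ unchanged, and I anticipate no genuine obstacle there.
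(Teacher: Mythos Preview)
Your proposal is correct and follows essentially the same approach as the paper: reduce to a linear system $Ax=b$ using the ancillary-variable trick of \Cref{ex:commute2}, then use $\msA(B)=\C\Gamma(A,b)/\langle J=-1\rangle$ from \Cref{ex:linear} together with \Cref{lem:lin_satisfiable}(a). The paper's proof is terser (it defers the reduction to ``as in the last proof''), but the argument is identical.
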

\begin{proof}
    As in the last proof, we can assume that $B$ is a linear system $Ax=b$.  If
    $B$ is algebraically satisfiable, then $1 \neq 0$ in $\msA(B) = \C
    \Gamma(A,b) / \langle J=-1\rangle$. Hence $J \neq 1$ in $\Gamma(A,b)$, so 
    $B$ is tracially satisfiable by \Cref{lem:lin_satisfiable}.
\end{proof}
Instead we get:
\begin{thm}
    If $\mcL$ is a constraint language which is not bijunctive, Horn, dual Horn, or linear,
    then
    \begin{enumerate}[(a)]
        \item there is a BCS with contexts over $\mcL$ which is $C^*$-satisfiable but not 
            tracially satisfiable, and
        \item there is a BCS with contexts over $\mcL$ which is algebraically satisfiable
            but not $C^*$-satisfiable.
    \end{enumerate}
\end{thm}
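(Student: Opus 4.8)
The plan is to reduce both parts to \Cref{lem:bcs_define}, applied to two constraint systems that have already been produced in the paper: the system $B$ of \Cref{prop:nottracial}, which is $C^*$-satisfiable but not tracially satisfiable, and the system $B(G,K_4)$ of \Cref{ex:notcstar} for a fixed finite graph $G$ with $C^*$-chromatic number greater than $4$, which is algebraically satisfiable but not $C^*$-satisfiable. The obstacle to mimicking the proof of \Cref{thm:dichotomy1} verbatim is that neither witness is a linear system — indeed \Cref{cor:nofun} shows that \emph{no} linear system can exhibit either separation — so \Cref{lem:lindefinable}, which only gives us definability of LIN, is not enough. What is needed instead is that a constraint language which is not bijunctive, Horn, dual Horn, or linear is \emph{complete}: every boolean relation is definable from it (with constants). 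The hypothesis ``not linear'' is precisely the extra input that makes this possible and that is consistent with \Cref{cor:nofun}.

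First I would establish this completeness statement. Let $\mcL'$ be the constraint language consisting of all relations obtained by substituting constants into relations of $\mcL$, so that definability from $\mcL'$ without constants implies definability from $\mcL$ with constants. Since $\mcL \subseteq \mcL'$ and $\mcL$ is not bijunctive, Horn, dual Horn, or linear, the language $\mcL'$ is not bijunctive, Horn, dual Horn, or affine, and — exactly by the $0$-valid/$1$-valid bookkeeping carried out in the proof of \Cref{lem:lindefinable} — $\mcL'$ is also not $0$-valid or $1$-valid. By Schaefer's key lemma \cite{Sch78} (the statement that a language avoiding all six of these classes can express every boolean relation), every relation is definable from $\mcL'$ without constants, hence from $\mcL$ with constants. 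In particular every BCS, and every BCS with contexts, over the language of all relations is definable from $\mcL$ in the sense of \Cref{defn:definable}.

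With completeness in hand, both parts follow the template of \Cref{thm:dichotomy1}. Taking $B_0$ to be the system of \Cref{prop:nottracial}, \Cref{lem:bcs_define} produces a BCS with contexts $B'$ over $\mcL$ together with $*$-homomorphisms $\iota : \msA(B_0) \hookrightarrow \msA(B')$ and $\pi : \msA(B') \twoheadrightarrow \msA(B_0)$ with $\pi \circ \iota = \Id$. Composing a $*$-homomorphism out of $\msA(B_0)$ with $\pi$, and one out of $\msA(B')$ with $\iota$, shows that the two algebras admit $*$-homomorphisms to exactly the same $*$-algebras; in particular $B'$ is $C^*$-satisfiable because $B_0$ is. A tracial state on $\msA(B')$ would restrict along the unital $*$-homomorphism $\iota$ to a tracial state on $\msA(B_0)$ (positivity and the trace identity pass through $*$-homomorphisms), which is impossible, so $B'$ is not tracially satisfiable; this is (a). For (b) we instead take $B_0 = B(G,K_4)$: as above $B'$ is $C^*$-satisfiable if and only if $B_0$ is, so $B'$ is not $C^*$-satisfiable, while $\iota$ being injective and unital forces $1 \neq 0$ in $\msA(B')$ from $1 \neq 0$ in $\msA(B_0)$, so $B'$ is algebraically satisfiable. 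The only substantive step is the completeness input of the second paragraph; everything after it is the same transfer along \Cref{lem:bcs_define} already used for \Cref{thm:dichotomy1}, and the only point requiring care is the reduction of the hypotheses of Schaefer's lemma to $\mcL'$, which is handled exactly as in \Cref{lem:lindefinable}.
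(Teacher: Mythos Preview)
Your proof is correct and follows essentially the same approach as the paper: invoke Schaefer's completeness result (the paper cites it as Theorem~3 of \cite{Sch78}) to conclude that every relation is definable from $\mcL$, then apply \Cref{lem:bcs_define} to the witnesses from \Cref{prop:nottracial} and \Cref{ex:notcstar}. You spell out more carefully why the $0$-valid/$1$-valid cases are excluded and why the relevant satisfiability notions transfer along $\iota$ and $\pi$, but this is just added detail rather than a different argument.
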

\begin{proof}
    Suppose $\mcL$ is not bijunctive, Horn, dual Horn, or linear. By Theorem 3 of \cite{Sch78}, 
    every relation is definable from $\mcL$. Thus (a) and (b) follow from 
    \Cref{lem:bcs_define}, \Cref{prop:nottracial}, and \Cref{ex:notcstar}. 
\end{proof}

As mentioned in \Cref{ex:mip}, the MIP*=RE theorem gives a BCS $B$ which is
tracially-satisfiable but not $\mcR^{\mcU}$-satisfiable. One idea for
constructing a non-hyperlinear group that's seen a lot of investigation is to
try to find a linear system $B'$ which is algebraically satisfiable, and for
which there is a homomorphism $\msA(B) \to \msA(B')$. The existence of
constraint systems which are algebraically satisfiable but not tracially
satisfiable (from \Cref{prop:nottracial} and \Cref{ex:notcstar}) shows that
there is no generic procedure to construct such a $B'$ for any algebraically
satisfiable (or even $C^*$-satisfiable) BCS $B$. Indeed, if there is a
homomorphism $\msA(B) \to \msA(B')$ such that $B'$ is linear and algebraically
satisfiable, then by \Cref{cor:nofun} both $B$ and $B'$ are tracially
satisfiable. Thus for this approach to constructing a non-hyperlinear group to
work, the construction of $B'$ would have to use some particular feature of
$B$, such as the fact that it is tracially satisfiable.

We can also look at analogues of $\SAT(\mcL)$ for other types of
satisfiability.  Given a constraint language $\mcL$, let $\SAT_t(\mcL)$ denote
the decision problem of determining whether a BCS with contexts over $\mcL$ is
$t$-satisfiable, where $t$ is one of matrix, $\mcR^{\mcU}$, tracially, $C^*$,
or algebraically.  If $\mcL$ is finite, then the constraint system $B'$ in
\Cref{lem:bcs_define} is computable from $B$, and hence $\SAT_t(\mcL')$ is
Turing reducible to $\SAT_t(\mcL)$. When combined with \Cref{lem:lindefinable},
this gives dichotomy theorems for all the types of satisfiability in
\Cref{def:sat_types}:
\begin{theorem}[\cite{AKS17}]\label{thm:dichotomy2}
    Suppose $\mcL$ is a finite constraint language. If $\mcL$ is bijunctive,
    Horn, or dual Horn, then $\SAT_t(\mcL)$ is solvable in
    polynomial time by \Cref{lem:noseps}. Otherwise:
    \begin{enumerate}[(a)]
        \item $\SAT_{matrix}(\mcL)$ is undecidable, and contained in RE.
        \item $\SAT_{\mcR^{\mcU}}(\mcL)$ is coRE-hard, and contained in $\Pi_2^0$.
        \item $\SAT_{tracial}(\mcL)$ is coRE-complete.
        \item $\SAT_{C^*}(\mcL)$ is coRE-complete.
        \item $\SAT_{algebraic}(\mcL)$ is coRE-complete.
    \end{enumerate}
\end{theorem}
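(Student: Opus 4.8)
The plan is to deduce each claim from the known undecidability results for linear systems, transported through the $*$-homomorphism machinery of \Cref{lem:bcs_define} and \Cref{lem:lindefinable}, together with the equivalences recorded in \Cref{lem:lin_satisfiable}, \Cref{cor:nofun}, and \Cref{lem:noseps}. The first sentence (polynomial time in the Schaefer cases) is immediate from \Cref{lem:noseps}: over $2$SAT, HORN, or DUAL-HORN, algebraic-satisfiability coincides with satisfiability, so every $\SAT_t(\mcL)$ reduces to $\SAT(\mcL)$, which is in $\mathrm{P}$ by Schaefer's theorem. So the content is parts (a)--(e), and I would organize the proof by first establishing the statements for $\mcL=\LIN$ (or rather for linear systems presented as BCS with contexts), then bootstrapping to arbitrary non-Schaefer $\mcL$ via the reduction ``$\SAT_t(\mcL')$ Turing-reduces to $\SAT_t(\mcL)$ when $\mcL'$ is (finitely, computably) definable from $\mcL$,'' applied with $\mcL' = \LIN$.

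The upper bounds I would argue as follows. For (a), $\SAT_{matrix}(\mcL) \in \mathrm{RE}$: on input $B$, enumerate over $d \geq 1$ and over rational-entry tuples of $d \times d$ matrices, check (exactly, since everything is rational) whether they satisfy the finitely many defining $*$-relations of $\msA(B)$ from \Cref{defn:bcs_algebra} (or the $P_R(S)=-1$ form of \Cref{cor:relationpres}); such a search halts iff a finite-dimensional $*$-representation exists. For (c)--(e), $\SAT_{tracial}(\mcL)$, $\SAT_{C^*}(\mcL)$, $\SAT_{algebraic}(\mcL) \in \mathrm{coRE}$: the complement asks whether $1 = 0$ in $\msA(B)$ (for algebraic), or whether there is no state / no tracial state; by \cite{CLS17} (or rather via \Cref{cor:nofun} reducing to the linear case, plus the solution-group description) these three coincide for linear systems, and in general one uses that $1=0$ in a finitely presented $*$-algebra is an r.e.\ condition (enumerate the ideal), while the absence of a (tracial) state is handled by a Positivstellensatz/sum-of-squares certificate search --- I would cite the standard fact that for a finitely presented semi-pre-$C^*$-algebra, ``$\msA$ has no tracial state'' and ``$\msA$ has no state'' are r.e.\ via enumerating certificates $-1 = \sum s_i^* s_i$ (resp.\ with commutators), which is exactly the NPA-type argument underlying the $\mathrm{coRE}$ membership. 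Actually, the cleanest route for (c)--(e) is: by \Cref{lem:bcs_define} with $\mcL' = \LIN$, each of $\SAT_{tracial}$, $\SAT_{C^*}$, $\SAT_{algebraic}$ over $\mcL$ Turing-reduces to the same problem over $\LIN$, where by \Cref{cor:nofun} all three equal $\SAT_{algebraic}(\LIN)$, i.e.\ ``$J \neq 1$ in $\Gamma(A,b)$,'' whose complement is the word problem for a family of finitely presented groups, hence r.e. For (b), $\SAT_{\mcR^{\mcU}}(\mcL) \in \Pi_2^0$: by \Cref{lem:lin_satisfiable}(b) over $\LIN$ this is ``for all $\eps$ there is an approximate representation making $J$ nontrivial'' --- a $\forall \exists$ over a computable matrix predicate, hence $\Pi_2^0$; the reduction from general $\mcL$ preserves this.

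The hardness/undecidability lower bounds all come from transporting known linear-system results through \Cref{lem:bcs_define} and \Cref{lem:lindefinable}: since $\LIN$ is definable from any non-Schaefer $\mcL$, and the construction of $B'$ from $B$ in \Cref{lem:bcs_define} is computable when $\mcL$ is finite, any lower bound for $\SAT_t(\LIN)$ transfers to $\SAT_t(\mcL)$. Concretely: (a) matrix-undecidability of linear systems is \cite{Slof19} (via MIP$^*$=RE-type arguments, or the earlier \cite{Slof17}); (b)--(e) coRE-hardness comes from MIP$^{co}$=coRE (equivalently, the undecidability of the commuting-operator value / tracial satisfiability), which by \Cref{lem:noseps}'s failure for $\LIN$ and the equivalence of tracial, $C^*$, and algebraic satisfiability for linear systems (\Cref{cor:nofun}, \cite{CLS17}) is a coRE-hard problem already for linear systems --- so all of (c),(d),(e) are coRE-hard, and combined with the coRE upper bound they are coRE-complete; (b) inherits coRE-hardness the same way (it is at least as hard as tracial satisfiability on linear systems by \cite{SV19} together with the group-theoretic reduction \cite[Theorem 3.1]{Slof19}). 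The one genuinely delicate point --- and the place I would spend the most care --- is matching the exact complexity classes rather than just ``undecidable'': in particular, justifying the $\Pi_2^0$ ceiling for $\SAT_{\mcR^{\mcU}}$ (why approximate representability of $J$ has a $\forall \eps\, \exists$ form with a \emph{decidable} matrix body, given quantification over matrix dimension and over rationals), and making sure the RE membership in (a) and coRE membership in (c)--(e) are not accidentally strengthened or weakened by the Turing reduction. The underlying arithmetical-hierarchy bookkeeping is routine but must be stated precisely, and I expect that to be the main obstacle to a clean write-up.
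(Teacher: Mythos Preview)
Your overall architecture is right and matches the paper's, but there is a genuine error in how you handle the \emph{upper bounds} for (b)--(e). The reduction coming from \Cref{lem:bcs_define} and \Cref{lem:lindefinable} goes in only one direction: since $\LIN$ is definable from $\mcL$, a constraint system $B$ over $\LIN$ yields $B'$ over $\mcL$ with $\msA(B)$ a retract of $\msA(B')$, and hence $\SAT_t(\LIN)$ Turing-reduces to $\SAT_t(\mcL)$. This is exactly what you need for the \emph{hardness} claims, but it says nothing about reducing $\SAT_t(\mcL)$ to $\SAT_t(\LIN)$. So your ``cleanest route'' for (c)--(e) --- reducing tracial/$C^*$/algebraic satisfiability over $\mcL$ to the word problem in $\Gamma(A,b)$ via \Cref{cor:nofun} --- does not work, and likewise your $\Pi_2^0$ argument for (b) cannot be obtained by first passing to $\LIN$ and then invoking \Cref{lem:lin_satisfiable}(b).

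The paper establishes all the upper bounds \emph{directly} for $\msA(B)$ with $B$ over an arbitrary $\mcL$: for (b), one uses that $\mcR^{\mcU}$-satisfiability is equivalent to the existence, for every $\eps>0$, of a finite-dimensional strategy for $\mcG(B)$ winning on each input pair with probability $\geq 1-\eps$, which is visibly $\Pi_2^0$; for (c)--(e), one uses the Positivstellensatz-type certificates you mentioned first (enumerate proofs that $1=0$, that $-1$ is a sum of hermitian squares, or that $-1$ is a sum of hermitian squares plus commutators, citing \cite{HM04,KS08}). You had these ingredients in hand before you discarded them, so the fix is simply to keep that first paragraph and drop the ``cleanest route'' detour. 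A minor point on citations: the coRE-hardness for (b)--(e) over $\LIN$ comes from \cite{Slof17,Slof19} (word-problem encodings into solution groups), not from an MIP$^{co}$ result; MIP*=RE is not needed anywhere in this theorem.
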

Only part (d) of \Cref{thm:dichotomy2} is proved in \cite{AKS17}, but the other
parts can be proved in the same way.
\begin{proof}
    If $\mcL$ is bijunctive, Horn, or dual Horn, then
    $\SAT_t(\mcL)$ is equivalent to $\SAT(\mcL)$ for all $t$ by
    \Cref{lem:noseps}, and thus all of these problems are solvable in
    polynomial time. 

    Suppose $\mcL$ is not bijunctive, Horn, or dual Horn.  By
    Lemma \ref{lem:lindefinable}, LIN is definable from $\mcL$, so
    $\SAT_t(\LIN)$ reduces to $\SAT_t(\mcL)$. 

    For part (a), \cite[Corollary 4]{Slof19} states that it is undecidable to
    determine if a linear system is matrix-satisfiable, so
    $\SAT_{matrix}(\mcL)$ is also undecidable. On the other hand,
    $\SAT_{matrix}(\mcL)$ is clearly contained in RE, since we can search
    through finite-dimensional strategies for a satisfying assignment. 

    For part (b), $B = (X, \{U_i, V_i\}_{i=1}^{\ell})$ has an
    $\mcR^{\mcU}$-satisfying assignment if and only if for all $\eps > 0$,
    there is a finite-dimensional strategy for $\mcG(B)$ which wins on every
    pair of inputs with probability $1-\eps$.\footnote{This condition can also
    be stated algebraically by saying that there is an $\eps$-representation of
    $\msA(B)$ for all $\eps > 0$. See \cite{Pad22} for details.} For any fixed
    $\eps > 0$, the existence of such a finite-dimensional strategy is in RE,
    so $\SAT_{\mcR^{\mcU}}(\mcL)$ is contained in $\Pi_2^0$. By \cite[Theorem
    2]{Slof17}, $\SAT_{\mcR^{\mcU}}(\LIN)$ is coRE-hard, so
    $\SAT_{\mcR^{\mcU}}(\mcL)$ is also coRE-hard. 

    For parts (c)-(e), $\SAT_{tracial}(\LIN)$ is coRE-hard by \cite[Corollary
    3.3]{Slof19}. By \Cref{cor:nofun}, $\SAT_{tracial}(\LIN)$, $\SAT_{C^*}(\LIN)$,
    and $\SAT_{algebraic}(\LIN)$ are all equivalent, so these problems are all
    coRE-hard. For upper bounds, $\SAT_{algebraic}(\mcL)$ is in coRE, since 
    when $0=1$ in $\msA(B)$ it is possible to demonstrate this with a sequence
    of relations. Similarly, as a semi-pre-$C^*$-algebra, $\msA(B)$ has a
    homomorphism to a $C^*$-algebra if and only if $-1$ is not positive \cite{HM04},
    and has a tracial state if and only if $-1$ is not a sum of hermitian squares
    and commutators \cite{KS08} (see also \cite{Oza13}). Hence
    $\SAT_{tracial}(\mcL)$, $\SAT_{C^*}(\mcL)$, and $\SAT_{algebraic}(\mcL)$ are all
    coRE-complete. 
\end{proof}
Note that the upper and lower bounds on $\SAT_{matrix}(\mcL)$ and
$\SAT_{\mcR^{\mcU}}(\mcL)$ in parts (a) and (b) of \Cref{thm:dichotomy2} are
not tight. However, by the MIP*=RE theorem determining whether an arbitrary
BCS is matrix-satisfiable is RE-complete. And using the MIP*=RE theorem, Mousavi
Nezhadi, and Yuen have shown that the problem of determining whether a BCS is
$\mcR^{\mcU}$-satisfiable is $\Pi^0_2$-complete \cite{MNY22}. Thus it seems
reasonable that:
\begin{conjecture}
    In parts (a) and (b) of \Cref{thm:dichotomy2}, $\SAT_{matrix}(\mcL)$ is
    RE-complete, and $\SAT_{\mcR^{\mcU}}(\mcL)$ is $\Pi^0_2$-complete.
\end{conjecture}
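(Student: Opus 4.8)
Toward this conjecture, the upper bounds are already available: the proof of \Cref{thm:dichotomy2} shows that $\SAT_{matrix}(\mcL)$ lies in RE and $\SAT_{\mcR^{\mcU}}(\mcL)$ lies in $\Pi^0_2$. So the task is to establish the matching hardness --- that $\SAT_{matrix}(\mcL)$ is RE-hard and $\SAT_{\mcR^{\mcU}}(\mcL)$ is $\Pi^0_2$-hard whenever $\mcL$ is not bijunctive, Horn, or dual Horn. The plan is to transport the known completeness results for \emph{arbitrary} BCS down to the language $\mcL$ using the definability machinery of \Cref{sec:define}, distinguishing whether or not $\mcL$ is linear.

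Suppose first that $\mcL$ is not bijunctive, Horn, dual Horn, \emph{or} linear. Then Theorem~3 of \cite{Sch78} gives that every boolean relation is definable from $\mcL$, so every BCS with contexts $B$ is definable from $\mcL$. By \Cref{lem:bcs_define} there is then a BCS with contexts $B'$ over $\mcL$, computable from $B$ when $\mcL$ is finite, together with $*$-homomorphisms $\iota\colon\msA(B)\to\msA(B')$ and $\pi\colon\msA(B')\to\msA(B)$ with $\pi\circ\iota=\Id_{\msA(B)}$; composing with $\iota$ on one side and $\pi$ on the other shows that for every $*$-algebra $\mcC$ there is a $*$-homomorphism $\msA(B)\to\mcC$ if and only if there is one $\msA(B')\to\mcC$. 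Taking $\mcC$ to range over the matrix algebras $M_d(\C)$ and, separately, $\mcC=\mcR^{\mcU}$, the assignment $B\mapsto B'$ is a computable many-one reduction of $\SAT_{matrix}$ (resp.\ $\SAT_{\mcR^{\mcU}}$) for arbitrary BCS to $\SAT_{matrix}(\mcL)$ (resp.\ $\SAT_{\mcR^{\mcU}}(\mcL)$). Since by the MIP*=RE theorem matrix-satisfiability of an arbitrary BCS is RE-complete, and by \cite{MNY22} $\mcR^{\mcU}$-satisfiability of an arbitrary BCS is $\Pi^0_2$-complete, both hardness claims hold in this case.

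The remaining case is that $\mcL$ is linear but not bijunctive, Horn, or dual Horn; in particular it includes $\mcL=\LIN$. Now Theorem~3 of \cite{Sch78} no longer applies, but \Cref{lem:lindefinable} still yields that $\LIN$ is definable from $\mcL$, so the same application of \Cref{lem:bcs_define} reduces $\SAT_{matrix}(\LIN)$ to $\SAT_{matrix}(\mcL)$ and $\SAT_{\mcR^{\mcU}}(\LIN)$ to $\SAT_{\mcR^{\mcU}}(\mcL)$; it therefore suffices to treat $\mcL=\LIN$. The conjecture thus reduces to showing
\begin{enumerate}[(i)]
    \item $\SAT_{matrix}(\LIN)$ is RE-hard, and
    \item $\SAT_{\mcR^{\mcU}}(\LIN)$ is $\Pi^0_2$-hard,
\end{enumerate}
whereas \cite{Slof19,Slof17} only provide the weaker facts that these problems are, respectively, undecidable and coRE-hard. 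This is the main obstacle. One cannot reach (i) and (ii) from the arbitrary-BCS case by the reduction used in the previous paragraph, since --- as observed after \Cref{cor:nofun} --- not every BCS algebra embeds in the BCS algebra of a linear system, and there is no evident way to push general-BCS hardness down into $\LIN$. Instead, (i) would follow from a linear-system analogue of the MIP*=RE theorem, namely a compilation of arbitrary nonlocal games into linear system games that preserves the existence of a perfect finite-dimensional strategy, and (ii) from the corresponding analogue of the Mousavi--Nezhadi--Yuen result preserving approximate value~$1$. A plausible route is to combine the group-embedding theorem \cite[Theorem~3.1]{Slof19} with finitely presented groups $G$ carrying a central element $J$ of order two for which it is RE-hard to decide whether $G$ has a finite-dimensional representation with $J\neq1$, respectively $\Pi^0_2$-hard to decide whether $J$ is nontrivial in approximate representations of $G$; producing such groups --- equivalently, upgrading the undecidability in \cite{Slof19,Slof17} to full RE- and $\Pi^0_2$-hardness --- is precisely the step the results quoted here do not supply.
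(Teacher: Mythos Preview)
The statement is a \emph{conjecture}, and the paper does not supply a proof; it only remarks after stating it that ``The challenge is showing that these lower bounds hold for $\SAT_{matrix}(\LIN)$ and $\SAT_{\mcR^{\mcU}}(\LIN)$,'' and that proving $\SAT_{\mcR^{\mcU}}(\LIN)$ is $\Pi^0_2$-complete would produce a non-hyperlinear group. So there is no proof in the paper to compare your proposal against; the right question is whether your analysis is sound and whether it goes beyond the paper's discussion.

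Your case split is correct and makes explicit something the paper leaves implicit. For $\mcL$ not bijunctive, Horn, dual Horn, \emph{or} linear, your argument is valid: Schaefer's Theorem~3 gives that every relation is definable from $\mcL$, \Cref{lem:bcs_define} then furnishes a computable $B\mapsto B'$ preserving $t$-satisfiability for each $t$, and this is indeed a many-one reduction (the paper says ``Turing reducible'' but the construction is many-one). Combined with MIP*=RE and \cite{MNY22}, this settles the conjecture for all non-linear $\mcL$ outside the Schaefer classes. The paper does not spell this out, so your proposal genuinely adds content here.

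For the linear case, you correctly reduce to $\mcL=\LIN$ via \Cref{lem:lindefinable} and identify (i)--(ii) as the open core. This matches the paper's assessment exactly. Your suggested route through \cite[Theorem~3.1]{Slof19} and hardness for group-theoretic decision problems is reasonable but, as you acknowledge, does not close the gap. One point you might add to align with the paper's emphasis: establishing (ii), or even just showing $\SAT_{\mcR^{\mcU}}(\LIN)$ is RE-hard, would already yield a non-hyperlinear group via \Cref{lem:lin_satisfiable}, so this case is not merely a technical lacuna but is entangled with a major open problem.

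In summary: your proposal is not a proof of the conjecture (nor does one exist in the paper), but it is a correct and slightly sharper analysis than the paper's own discussion, resolving the non-linear case outright and isolating $\LIN$ as the sole obstruction.
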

The challenge is showing that these lower bounds hold for $\SAT_{matrix}(\LIN)$ and
$\SAT_{\mcR^{\mcU}}(\LIN)$. Proving that $\SAT_{\mcR^{\mcU}}(\LIN)$ is
$\Pi^0_2$-complete (or even just that it contains RE) would imply the
existence of a non-hyperlinear group. 

\bibliography{bibfile} \bibliographystyle{alpha}

\end{document}